\newtheorem{definition}{Definition}
\newtheorem{theorem}{Theorem}
\newtheorem{lemma}{Lemma}
\newwrite\authorbibfile%
\def\AuthorbibKVMacroScale{#1}%
\def\AuthorbibKVMacroWraplines{#1}%
\def\AuthorbibKVMacroImagewidth{#1}%
\def\AuthorbibKVMacroOverhang{#1}%
\def\AuthorbibKVMacroImagepos{#1}%
\newlength{\AuthorbibTopSkip}
\newlength{\AuthorbibBottomSkip}
\NewDocumentCommand{\authorbibliography}{+o+m+m+m}{%
  \IfNoValueTF{#1}{%
  }{%
    \setkeys{authorbib}{#1}%
    \immediate\write\authorbibfile{%
      \string\begin{wrapfigure}[\AuthorbibKVMacroWraplines]{\AuthorbibKVMacroImagepos}[\AuthorbibKVMacroOverhang]{\AuthorbibKVMacroImagewidth}^^J
        \string\includegraphics[scale=\AuthorbibKVMacroScale]{#2}^^J
        \string\end{wrapfigure}^^J
    }%
  }%
  \IfNoValueTF{#3}{%
    \typeout{Warning: No author name}%
  }{%
      \immediate\write\authorbibfile{%
      \unexpanded{\vspace{\AuthorbibTopSkip}}^^J
      \string\noindent\relax
      \unexpanded{\textbf{#3}\par}^^J
      \string\noindent\relax
      \unexpanded{#4}^^J%
      \unexpanded{\vspace{\AuthorbibBottomSkip}}^^J
      }%
  }%
}%
\journal{Journal of Systems Architecture}
\begin{document}

\begin{frontmatter}

\title{Joint Task Offloading and Resource Optimization in NOMA-Based Vehicular Edge Computing: A Game-Theoretic DRL Approach}

\author[cqu]{Xincao Xu}
\ead{near@cqu.edu.cn}

\author[cqu,nf]{Kai Liu\corref{mycorrespondingauthor}}
\cortext[mycorrespondingauthor]{Corresponding author}
\ead{liukai0807@cqu.edu.cn}

\author[swjtu]{Penglin Dai}
\ead{penglindai@swjtu.edu.cn}

\author[cqu]{Feiyu Jin}
\ead{fyjin@cqu.edu.cn}

\author[cqu]{Hualing Ren}
\ead{renharlin@cqu.edu.cn}

\author[scnu]{Choujun Zhan}
\ead{zchoujun2@gmail.com}

\author[cqu]{Songtao Guo}
\ead{guosongtao@cqu.edu.cn}

\address[cqu]{College of Computer Science, Chongqing University, Chongqing, China}
\address[nf]{School of Electrical and Computer Engineering, Nanfang College, Guangzhou, China}
\address[swjtu]{School of Computing and Artificial Intelligence, Southwest Jiaotong University, Chengdu, China}
\address[scnu]{School of Computing, South China Normal University, Guangzhou, China}

\begin{abstract}
Vehicular edge computing (VEC) becomes a promising paradigm for the development of emerging intelligent transportation systems. 
Nevertheless, the limited resources and massive transmission demands bring great challenges on implementing vehicular applications with stringent deadline requirements. 
This work presents a non-orthogonal multiple access (NOMA) based architecture in VEC, where heterogeneous edge nodes are cooperated for real-time task processing. 
We derive a vehicle-to-infrastructure (V2I) transmission model by considering both intra-edge and inter-edge interferences and formulate a cooperative resource optimization (CRO) problem by jointly optimizing the task offloading and resource allocation, aiming at maximizing the service ratio. 
Further, we decompose the CRO into two subproblems, namely, task offloading and resource allocation.
In particular, the task offloading subproblem is modeled as an exact potential game (EPG), and a multi-agent distributed distributional deep deterministic policy gradient (MAD4PG) is proposed to achieve the Nash equilibrium.
The resource allocation subproblem is divided into two independent convex optimization problems, and an optimal solution is proposed by using a gradient-based iterative method and KKT condition.
Finally, we build the simulation model based on real-world vehicular trajectories and give a comprehensive performance evaluation, which conclusively demonstrates the superiority of the proposed solutions.
\end{abstract}

\begin{keyword}
Vehicular edge computing \sep Real-time task offloading \sep Heterogeneous resource allocation \sep Deep reinforcement learning \sep Exact potential game
\end{keyword}

\end{frontmatter}


\section{Introduction}
Recent advances in vehicular networks have paved the way for the development of emerging intelligent transportation systems (ITSs) such as cooperative autonomous driving \cite{bagheri20215g} and vehicular cyber-physical systems \cite{xu2022age}. 
Nevertheless, it demands massive data transmission and intensive task computation to enable most of the applications. 
For instance, modern vehicles such as Tesla Model X have already equipped with eight cameras, 12 ultrasonic radars, and one millimeter-wave radar, and the data computation requirements are ever-increasing. On the other hand, the limited communication and computation resources in vehicular networks make it non-trivial to support real-time vehicular applications.  Clearly, it is imperative to investigate efficient real-time task offloading and heterogeneous resource allocation in vehicular networks.

Vehicular edge computing (VEC) \cite{liu2019hierarchical} has recently emerged as a promising paradigm to facilitate task processing at the edge of vehicular networks. 
Great efforts have been paid to the development of VEC \cite{liu2020fog, dai2021edge, zhang2021digital, liu2020adaptive, liu2017coding, xu2020vehicular}, where the edge node such as 5G base stations and roadside units (RSUs) with collocated computation units, can process tasks with the data uploaded by vehicles via vehicle-to-infrastructure (V2I) communications. 
However, none of them have exploited the non-orthogonal multiple access (NOMA) \cite{islam2017power} technology to enhance the network capacity. 
Some studies have incorporated NOMA in vehicular communications \cite{patel2021performance, zhang2021centralized, zhu2021decentralized, liu2019energy}, where the vehicles may communicate with the edge node over the same frequency of bandwidth with different transmission power. 
However, these studies only considered single edge node scenario and cannot deal with the interference among different edge nodes. 
To improve system reliability, a few studies have taken resource allocation into consideration to counteract the effect of time-varying V2I channel conditions and dynamic available computation resources in VEC \cite{liu2021rtds, liu2022a, chen2020robust, liu2014temporal, xu2021potential, liu2015cooperative}. 
Nevertheless, none of them investigated the synergistic effects of real-time task offloading and communication/computation resource allocation.
Some studies designed allocation mechanisms for both communication and computation resources to improve resource efficiency \cite{cui2021reinforcement, han2020reliability, xu2021socially}.
A few literatures formulated the joint optimization model by integrating task offloading and resource allocation \cite{dai2021asynchronous, dai2020probabilistic}.
However, existing studies are mainly based on centralized scheduling, which may hurdle system scalability in vehicular networks.
The multi-agent deep reinforcement learning (MADRL) \cite{althamary2019survey} as an emerging distributed solution has been proposed for vehicular applications \cite{alam2022multi, zhang2021adaptive, nie2021semi}.
On the other hand, some of the work combined reinforcement learning and game theory \cite{zheng2022stackelberg, albaba2021driver, rajeswaran2020game} to solve complex optimization problems.
However, none of the solutions can be directly applied in vehicular networks for joint real-time task offloading and heterogeneous resource allocation.

With above motivation, this paper investigates a distributed scheduling solution for joint task offloading and resource allocation based on the multi-agent distributed distributional deep deterministic policy gradient (MAD4PG) and potential game theory.
In particular, we first model the task offloading decision-making process as an exact potential game (EPG) \cite{chew2016potential} with Nash equilibrium (NE) existence and convergence under a designed potential function, where the edge nodes are rational players to maximize their profits (i.e., the service ratio of real-time tasks, which is the number of tasks completed before their deadlines over the number of total tasks).
According to potential game theory, the NE can be achieved by maximizing the potential of each edge node with the designed potential function.
Naturally, the potential function is well-suitable as the rewards for edge nodes in the proposed MAD4PG algorithm.
The remaining resource allocation problem is divided into two independent convex optimization problems, and an optimal solution is proposed based on a gradient-based iterative method and Karush-Kuhn-Tucher (KKT) condition \cite{boyd2004convex}.

This work puts the first effort on jointly investigating the real-time task offloading and heterogeneous resource allocation in NOMA-based VEC by addressing the following challenges. 
First, the V2I uplinks suffer interference from vehicles using the same channel, and the influence depends on the transmission power allocated by edge nodes. 
Second, there is a serious imbalance of workload distribution among different edge nodes due to the time-varying distribution of computation-intensive and delay-sensitive tasks.
Third, it is non-trivial to make edge nodes determine task offloading and resource allocation decisions independently and efficiently only with their local knowledge.
Thus, it is imperative yet challenging to study an effective and distributed method for joint real-time task offloading and heterogeneous resource allocation in NOMA-based VEC.
The main contributions of this work are outlined as follows.

\begin{itemize}
	\item We present a NOMA-based VEC architecture, where the vehicles share the same frequency of bandwidth resources and communicate with the edge node with the allocated transmission power. The tasks arrive stochastically at vehicles and have different computation resource requirements and deadlines, which can be further uploaded to the edge nodes for computing via V2I communications. The edge nodes with heterogeneous computation capabilities, i.e., CPU clock frequencies, can either execute the tasks locally with allocated computation resources or migrate the tasks to neighboring edge nodes via wired connections.
	\item We formulate a cooperative resource optimization (CRO) problem, which jointly offloads tasks and allocates communication and computation resources to maximize the service ratio. Specifically, we derive a V2I transmission model, in which both intra-edge and inter-edge interferences are modeled based on the NOMA principle. Then, we derive a task offloading model by considering the cooperation of heterogeneous edge nodes. 
	\item We decompose the CRO into two subproblems, namely, task offloading and resource allocation. Specifically, we model the first subproblem as a non-cooperative game among edge nodes, which is further proved as an EPG with NE existence and convergence. Then, we design a MAD4PG algorithm, which is a multi-agent version of D4PG \cite{barth2018distributed}, to achieve the NE, where edge nodes act as independent agents to determine the task offloading decisions by adopting the achieved potential as rewards. Further, we divide the second subproblem into two independent convex problems and derive an optimal solution based on the gradient-based iterative method and KKT condition.
	\item We build the simulation model based on real-world vehicular trajectories. Then, in addition to the primary metrics, cumulative reward and average service ratio, we design another four metrics, including average processing time, average service time, average achieved potential, and proportion of locally processing to migration, to give insights into performance evaluation. Further, we implement the proposed algorithms, as well as four competitive solutions, i.e., optimal resource allocation and task migration only (ORM), optimal resource allocation and task local processing only (ORL), distributed distributional deep deterministic policy gradient (D4PG) \cite{barth2018distributed} for joint task offloading and resource allocation, and multi-agent deep deterministic policy gradient (MADDPG) \cite{zhang2021adaptive} for task offloading based on optimal resource allocation. The simulation results conclusively demonstrate the superiority of the proposed algorithms.
\end{itemize}

The rest of this paper is organized as follows.
Section II reviews the related work.
Section III presents the system architecture.
Section IV formulates the CRO problem.
Section V proposes the solutions.
Section VI evaluates the performance.
Finally, Section VII concludes this paper and discusses future research directions.

\section{Related Work}

Great efforts have been devoted to vehicular applications at the edge of vehicular networks. 
Liu et al. \cite{liu2020fog} investigated the cooperative data dissemination problem in an end-edge-cloud cooperation architecture. A clique-based algorithm was proposed to schedule data encoding and dissemination jointly. 
Dai et al. \cite{dai2021edge} designed a VEC architecture for adaptive-bitrate-based multimedia streaming, where edge cache and transmission services are provided for file chunks encoded with different quality levels. 
Zhang et al. \cite{zhang2021digital} presented a socially aware vehicular edge caching technique that dynamically orchestrates the cache capabilities of edge nodes and intelligent vehicles based on user preference similarity and service availability.
Liu et al. \cite{liu2020adaptive} presented a two-layer VEC architecture to exploit the cloud, static edge nodes, and mobile edge nodes for processing time-critical tasks. 
Liu et al. \cite{liu2017coding} proposed a memetic algorithm to exploit the synergistic effects between vehicular caching and network coding for enhancing the bandwidth efficiency of data broadcasting in VEC. 
Xu et al. \cite{xu2020vehicular} proposed a vehicle collision warning strategy based on trajectory calibration and considering V2I communication delay and packet loss. 
However, the existing VEC architectures of these studies are based on the conventional orthogonal multiple access (OMA), and none of them have exploited the NOMA technology to enhance the network capacity.

Several researchers have exploited the NOMA technology in vehicular networks to further improve bandwidth efficiency.
Patel et al. \cite{patel2021performance} evaluated the communication capacity of NOMA-based vehicular networks, and the numerical results show that NOMA outperforms the conventional OMA by approximately 20\%. 
Zhang et al. \cite{zhang2021centralized} developed a centralized two-stage resource allocation strategy for NOMA-integrated vehicular networks using a graph-based matching approach and distributed power control via a non-cooperative game. 
Zhu et al. \cite{zhu2021decentralized} proposed an optimal power allocation strategy that considers stochastic task arrival and channel fluctuation to maximize the long-term power consumption and delay. 
Liu et al. \cite{liu2019energy} proposed an alternate direction algorithm for the multipliers technique to allocate power in NOMA-based autonomous driving vehicle networks.
Nevertheless, these studies are mainly based on single edge node scenario, while the interference among different edge nodes cannot be handled. 

Some studies have focused on task offloading or resource allocation in VEC.
Liu et al. \cite{liu2021rtds} proposed a real-time distributed method for multi-period task offloading by evaluating the mobility-aware communication model, the resources-aware computation model, and the deadline-aware award model in VEC.
Liu et al. \cite{liu2022a} presented an algorithm that combined the alternating direction method of multipliers and particle swarm optimization for task offloading to minimize the weighted sum of execution delay, energy consumption, and payment cost.
Chen et al. \cite{chen2020robust} presented a computation offloading approach with failure recovery to reduce energy usage and shorten application completion time. 
Liu et al. \cite{liu2014temporal} proposed a heuristic scheduling algorithm for real-time data dissemination by considering both the time constraint of data dissemination and the freshness of data items.
Xu et al. \cite{xu2021potential} proposed an incentive-based probability update and strategy selection algorithm for subchannel allocation by modeling the channel allocation problem as a potential game.
Liu et al. \cite{liu2015cooperative} developed a greedy method for cooperative data dissemination in a hybrid vehicular communication environment.
Nevertheless, none of them investigated the synergistic effects of real-time task offloading and communication/computation resource allocation.

Several studies have considered joint communication and computation resource allocations in VEC. 
Cui et al. \cite{cui2021reinforcement} proposed a multi-objective reinforcement learning method to reduce system delay by combining communication and computation resource allocation.
Han et al. \cite{han2020reliability} presented the coupling-oriented reliability calculation for cooperative vehicular communication and computing via dynamic programming methods. 
Xu et al. \cite{xu2021socially} employed contract theory to allocate communication and computation resources for each prospective content supplier and content requester pair.
A few researchers have studied joint task offloading and resource allocation.
Dai et al. \cite{dai2021asynchronous} proposed a asynchronous deep reinforcement learning for data-driven task offloading considering heterogeneous servers, such as powerful vehicles, VEC servers, and the cloud.
Dai et al. \cite{dai2020probabilistic} developed a probabilistic computation offloading approach for independently scheduling computation offloading based on the calculated allocation probability in edge nodes. 
However, the existing studies cannot be applied to large-scale vehicle networks, as they are mainly based on centralized scheduling with high communication overhead and scheduling complexity.

Some studies have focused on task offloading or resource allocation by using MADRL \cite{althamary2019survey} in vehicular networks.
Alam et al. \cite{alam2022multi} developed a multi-agent DRL-based Hungarian algorithm (MADRLHA) for dynamic task offloading in VEC to guarantee latency, energy consumption, and payment cost requirements.
Zhang et al. \cite{zhang2021adaptive} presented a MADDPG approach for edge resource allocation to minimize vehicular task offloading cost under strict delay constraints.
Nie et al. \cite{nie2021semi} proposed a multi-agent federated reinforcement learning (MAFRL) algorithm to optimize resource allocation, user association, and power control jointly in an unmanned aerial vehicle (UAV)-enabled VEC.
The combination of game theory and reinforcement learning has recently attracted much academic attention.
Zhang et al. \cite{zheng2022stackelberg} modeled the interaction between the actor and critic of actor-critic-based reinforcement learning as a two-play general-sum game with a leader-follower structure.
Albaba et al. \cite{albaba2021driver} combined the DQN and hierarchical game theory for behavioral predictions of drivers in highway driving scenarios, where level-k reasoning is used to model the decision-making process of human drivers.
Rajeswaran et al. \cite{rajeswaran2020game} developed a framework that casts model-based reinforcement learning as a Stackelberg game between a policy player and a model player.
However, none of the solutions can be directly applied in vehicular networks for joint real-time task offloading and heterogeneous resource allocation.

\section{System Architecture}

In this section, we propose a NOMA-based architecture for cooperative communication and computation in VEC. 
As illustrated in Fig. \ref{fig_1}, the infrastructures (e.g., $e_1$$\sim$$e_3$) installed along the roadside such as 5G base stations and RSUs, have different equipped computing units (i.e., CPU chips), which are considered as edge nodes to accelerate computation task wireless offloaded by mobile vehicles.
The tasks arrive stochastically at vehicles, which may contain different data to be computed. 
The vehicle can communicate with the edge node via V2I communications within its communication range.
Then, the vehicles upload the task to a nearby edge node, where the transmission power is allocated by the corresponding edge node.
By employing superposition coding (SC) at vehicles and successive interference cancellation (SIC) at edge nodes \cite{khan2021noma}, vehicles can share the same frequency of bandwidth resources. 
In particular, the signals of strong vehicles are decoded and canceled by the edge nodes in succession before decoding the signals of weak vehicles in NOMA-based VEC.
In addition, the edge nodes are connected via a wired network.
Further, the edge nodes decide whether to execute the received task locally or migrate it to other nodes via wired connections.
Finally, the edge nodes allocate the computation resources for processing tasks.

\begin{figure}
\centering
  \includegraphics[width=0.75\columnwidth]{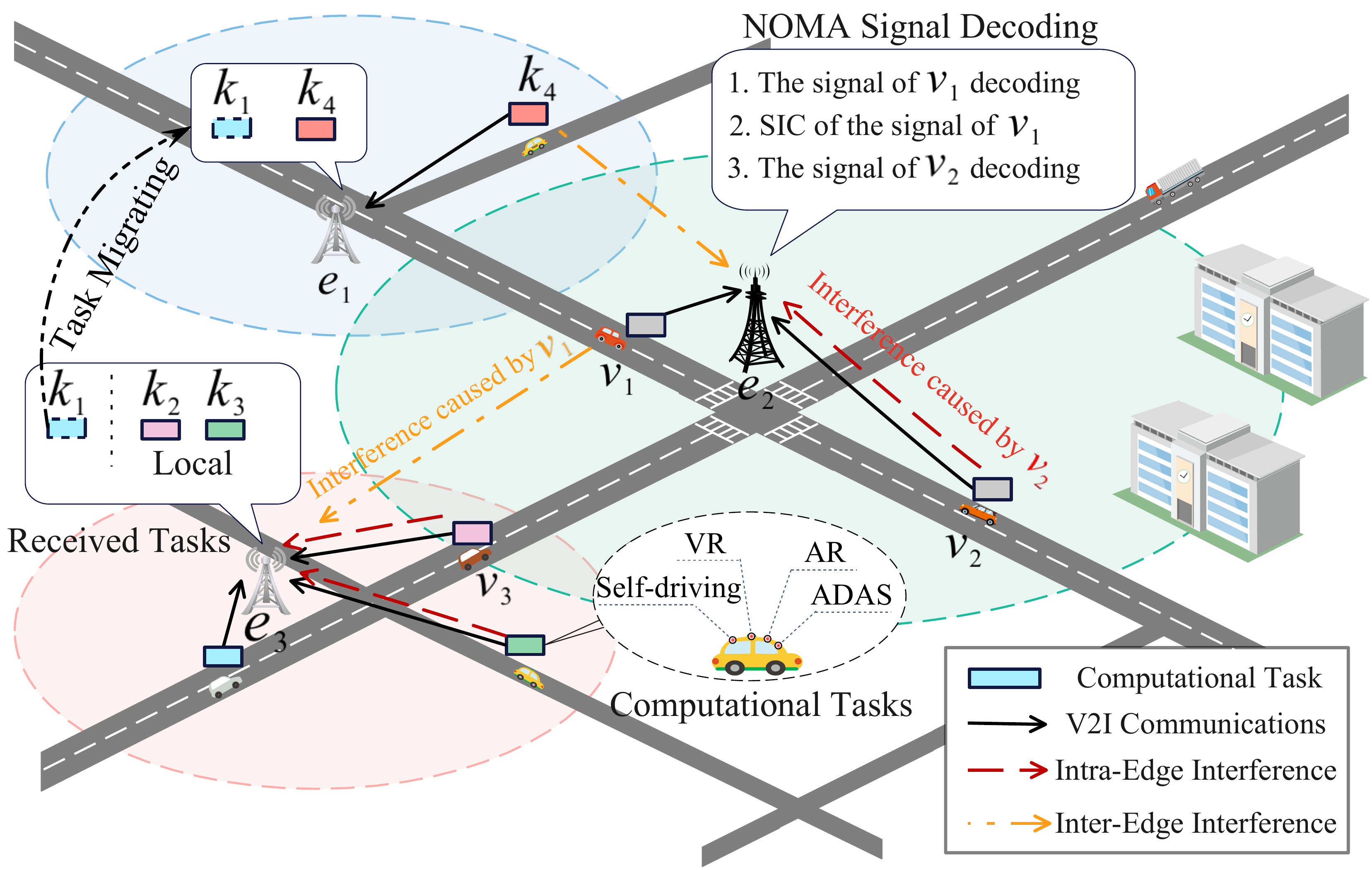}
  \caption{NOMA-based vehicular edge computing architecture}
  \label{fig_1}
\end{figure} 

The system characteristics are summarized as follows.
First, the computational tasks requested by vehicles may have different data sizes, computation resource requirements, and deadlines. 
Therefore, the service condition of tasks (i.e., whether the task will be successfully serviced before its deadline) may differ when offloaded to different edge nodes with diverse computation capabilities, i.e., the CPU clock frequencies. 
Second, increasing the transmission power of a vehicle may improve the achieved V2I transmission rate, but can also damage other V2I uplinks due to enhanced intra-edge and inter-edge interferences. 
Moreover, the power allocations of edge nodes vary over time and are unknown to each other. 
Thus, the edge node must determine the transmission power of vehicles by considering the effects of other edge nodes' power allocation. 
Third, the workloads of edge nodes may not be balanced due to the stochastic arrival of tasks and the time-varying distribution of vehicles. 
When an edge node is overburdened, it becomes appropriate to migrate the extra tasks to other edge nodes with excess computation resources to expedite processing. 
However, transmitting the task data to the edge node via wired connections lengthens the task service time. 

Further, we provide an example to demonstrate the concept better. 
As shown in Fig. \ref{fig_1}, vehicles $v_1$ and $v_2$ upload their tasks via V2I communications.
Since the V2I channel condition between edge node $e_2$ and vehicle $v_1$ is better than that of vehicles $v_2$ and $v_3$, the signal of vehicle $v_1$ can be decoded firstly by treating other signals as the noise. 
Then, the signal of vehicle $v_1$ can be canceled by the edge node $e_2$ when decoding the signals of vehicles $v_2$ and $v_3$. 
However, the signal of vehicle $v_1$ may be interfered by vehicle $v_2$ during the V2I transmission; such interference is termed ``intra-edge interference", since vehicles $v_1$ and $v_2$ are within the radio coverage of the same edge node $e_2$.  
On the other hand, the signal of vehicle $v_3$ may be interfered by vehicle $v_1$, and such interference from other edge nodes is termed ``inter-edge interference."
Moreover, it is obvious that the task workload among edge nodes $e_1$ and $e_3$ is uneven, since there are three tasks (i.e., $k_1$, $k_2$, and $k_3$) in the edge node $e_3$ but only one task $k_4$ in the edge node $e_1$. 
Assume that the computation resources of edge node $e_1$ are significantly more than those of edge node $e_3$. 
The task $k_1$ should be migrated to the edge node $e_1$ over the wired network so that it can be serviced in a shorter time.
As demonstrated above, it is crucial yet difficult to take advantage of cooperative communication and computation among edge nodes by designing an effective and distributed scheduling mechanism for real-time task offloading and heterogeneous resource allocation to optimize the overall system performance.

\section{Problem Formulation}

\subsection{Preliminary}
The set of discrete time slots is denoted by $\mathbf{T}=\{1, \ldots, t, \ldots, T\}$, where $T$ is the number of time slots.
The set of vehicles is denoted by $\mathbf{V}=\{1, \ldots, v, \ldots, V\}$, and the location of vehicle $v \in \mathbf{V}$ at time $t$ is denoted by $l_{v}^{t}$.
The task arrival probability of vehicle $v$ at time $t$ is denoted by $\tau_{v}^{t}$, and we denote the set of tasks requested by the vehicle $v$ as $\mathbf{K}_{v}$.
The task requested by vehicle $v$ at time $t$ denoted by $k_{v}^{t} \in \mathbf{K}_{v}$ is characterized by a three-tuple $k_{v}^{t}=\left(d_{k}, c_{k}, t_{k}\right)$, where $d_{k}$, $c_{k}$, and  $t_{k}$ are the data size, CPU cycles for processing one bit of data, and deadline, respectively.
The set of edge nodes is denoted by $\mathbf{E}=\{1, \ldots, e, \ldots, E\}$, and the edge node $e \in \mathbf{E}$ is characterized by a four-tuple $e=\left(p_{e}, c_{e}, u_{e}, l_{e}\right)$, where $p_{e}$ is the maximum power of V2I communications, $c_{e}$ is the computing frequency, $u_e$ is the V2I communication range, and $l_{e}$ is the location.
The transmission rate of wired communication between edge nodes is denoted by $z$.
The distance between vehicle $v$ and edge node $e$ at time $t$ is denoted by $\operatorname{dist}_{v, e}^{t}$.
The set of vehicles within the radio coverage of edge node $e$ at time $t$ is denoted by $\mathbf{V}_{e}^{t}=\left\{v \mid \operatorname{dist}_{v, e}^{t} \leq u_{e}, \forall v \in \mathbf{V}\right\}, \mathbf{V}_{e}^{t} \subseteq \mathbf{V}$.
The bandwidth of V2I communications is denoted by $b$.
The primary notations are summarized in Table \ref{table_notations}.

\begin{table*}[ht]\scriptsize
\centering
\caption{Summary of primary notations}
\begin{tabular}[t]{lll}
\hline
\hline
Notations&Descriptions\\
\hline
$\mathbf{T}$&Set of discrete time slots $\mathbf{T}=\{1, \ldots, t, \ldots, T\}$\\
$\mathbf{V}$&Set of vehicles $\mathbf{V}=\{1, \ldots, v, \ldots, V\}$\\
$\mathbf{E}$&Set of edge nodes, $e \in \mathbf{E}$ and $e=\left(p_{e}, c_{e}, r_{e}, l_{e}\right)$\\
$\tau_{v}^{t}$&Task arrival probability of vehicle $v$ at time $t$\\
$\mathbf{K}_{v}$&Set of computation tasks requested by vehicle $v$\\
$k_{v}^{t}$&Task requested by vehicle $v$ at time $t$, $k_{v}^{t} \in \mathbf{K}_{v}$ and $k_{v}^{t}=\left(d_{k}, c_{k}, t_{k}\right)$\\
$d_k$&Data size of task $k_{v}^{t}$\\
$c_k$&CPU cycles for processing one bit data of task $k_{v}^{t}$\\
$t_{k}$&Deadline of task $k_{v}^{t}$\\
$l_v^t$&Location of vehicle $v$ at time $t$\\
$p_e$&Maximum power of V2I communications at edge node $e$\\
$c_e$&Computing clock frequency of edge node $e$\\
$u_e$&V2I communication range of edge node $e$\\
$l_e$&Location of edge node $e$\\
$\operatorname{dist}_{v, e}^{t}$&Distance between vehicle $v$ and edge node $e$ at time $t$\\
$z$&Wired transmission rate between edge nodes\\
$b$&Bandwidth of V2I communications\\
$p_{v, e}^{t}$&Transmission power of vehicle $v$ allocated by edge node $e$ at time $t$\\
$q_{v, e}^t$&Binary indicates whether task $k_v^t$ is offloaded to edge node $e$\\
$c_{v, e}^t$&Computation resource allocated by edge node $e$ for task of vehicle $v$ at time $t$\\
$\mathbf{V}_{e}^{t}$&Set of vehicles within the coverage of edge node $e$ at time $t$\\
$\mathbf{V}_{h_{v, e}}^{t}$&Set of vehicles that have a worse channel condition than vehicle $v$ at time $t$\\
$\mathbf{K}_{e}^{t}$&Set of tasks uploaded by vehicles that are within the coverage of edge node $e$ at time $t$\\
$\mathbf{K}_{q_e}^{t}$&Set of tasks which are offloaded in edge node $e$ at time $t$\\
\hline
\hline
\end{tabular}
\label{table_notations}
\end{table*}

\begin{figure}
\centering
  \includegraphics[width=1\columnwidth]{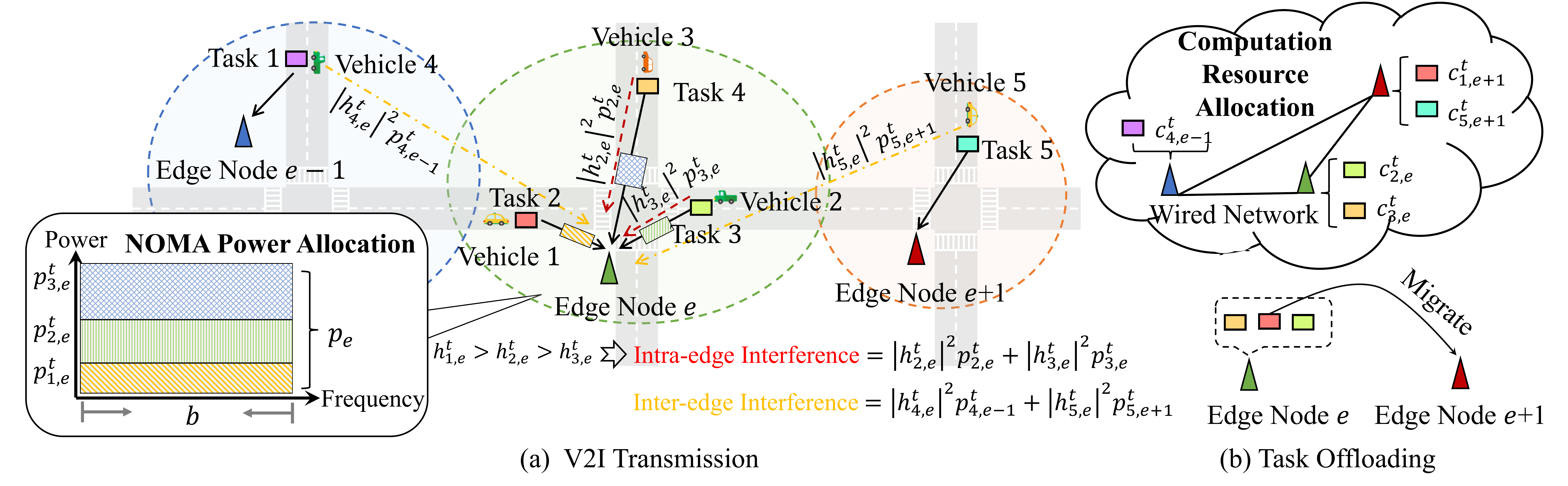}
  \caption{System model}
  \label{fig_2}
\end{figure}

\subsection{V2I Transmission Model}

The V2I transmission model is shown in Fig. \ref{fig_2}(a), and the intra-edge and inter-edge interferences are modeled based on the NOMA principle. 
We denote the transmission power of vehicle $v$ allocated by edge node $e$ at time $t$ as $p_{v, e}^{t}$.
The sum of allocated power cannot exceed the maximum power of V2I communications at edge node $e$, i.e., $\sum_{\forall v \in {V}_{e}^{t}} p_{v, e}^{t} \leq p_{e}$.
Then, the channel gain between the vehicle $v$ and edge node $e$ at time $t$ is denoted by $h_{v, e}^t$, which is computed by \cite{sun2020performance}:
\begin{equation}
	h_{v, e}^t = \frac{\eta_{v, e}}{{\operatorname{dist}_{v, e}^{t}}^{\varphi/2}}
\end{equation}
\noindent where $\eta_{v, e}$ is the Rayleigh distributed small scale fading, i.e., $\eta_{v, e} \sim \mathcal{CN}(0, 1)$, and $\varphi$ is the large scale path loss exponent.
Thus, the set of vehicles that have a worse instantaneous channel condition than vehicle $v$ is denoted by $\mathbf{V}_{h_{v, e}}^{t}$, which is represented by:
\begin{equation}
	\mathbf{V}_{h_{v, e}}^{t} = \left \{ v^{\prime} \mid  \left|h_{v^{\prime}, e}^t \right|^{2} < \left| h_{v, e}^t\right |^{2} , \forall v^{\prime} \in \mathbf{V}_{e}^{t} \right \}
\end{equation}

After determining the transmission power of each vehicle $v \in \mathbf{V}_{e}^{t}$, the observed signal by the edge node $e$ can be represented by \cite{islam2017power}:
\begin{equation}
	y_e^{t} = \sum_{\forall v \in \mathbf{V}_{e}^{t}} p_{v, e}^{t} s_{v, e}^{t} h_{v, e}^t + \sum\limits_{\forall e^{\prime} \in \mathbf{E} / \{e\}} \sum\limits_{\forall v^{\prime} \in \mathbf{V}_{e^{\prime}}^{t}} p_{v^{\prime}, e^{\prime}}^{t} s_{v^{\prime}, e^{\prime}}^{t} h_{v^{\prime}, e}^t + N_{0}
\end{equation}
where $s_{v, e}^{t}$ is the the message intended for vehicle $v$ and $N_{0}$ is the additive white Gaussian noise (AWGN).
According to the NOMA principle, the edge node $e$ can cancel the signals of vehicles with a better channel condition than vehicle $v$ via SIC.
Thus, the signal-to-interference-plus-noise ratio (SINR) between vehicle $v$ and edge node $e$ at time $t$ is denoted by $\mathrm{SINR}_{v, e}^t$, which can be computed by:
\begin{equation}
	\mathrm{SINR}_{v, e}^t = \frac{ |h_{v, e}^t| ^{2}  p_{v, e}^{t}}{ \underbrace{\sum\limits_{\forall v^{\prime} \in \mathbf{V}_{h_{v, e}}^{t}} |h_{v^{\prime}, e}^t|^2 p_{v^{\prime}, e}^{t}}_{\text {Intra-edge interference }} + \underbrace{\sum\limits_{\forall e^{\prime} \in \mathbf{E} / \{e\}} \sum\limits_{\forall v^{\prime} \in \mathbf{V}_{e^{\prime}}^{t}} |h_{v^{\prime}, e}^t|^2 p_{v^{\prime}, e^{\prime}}^{t}}_{\text {Inter-edge interference }} + N_{0}}
	\label{SINR}
\end{equation}
where $p_{v^{\prime}, e}^{t}$ is the transmission power of vehicle $v^{\prime} \in \mathbf{V}_{e}^{t}$ and $|h_{v^{\prime}, e}^t|^2$ is the channel coefficient of the interference link between vehicle $v^{\prime}$ and edge node $e$. 
The first and second parts in the denominator represent intra-edge and inter-edge interferences, respectively. 
Therefore, the uploading time of task $k_{v}^{t}$ requested by vehicle $v$ and transmitted to edge node $e$ is computed by:
\begin{equation}
	m_{v, e}^{t} = \frac{d_{k}}{b  \log _{2}\left(1+\mathrm{SINR}_{v, e}^t\right)}
	\label{equation_data_transmission_time}
\end{equation}
where $d_k$ is the data size of task $k_{v}^{t}$ and $b$ is the bandwidth of V2I communications, measured by Hz.

\subsection{Task Offloading Model}
The set of tasks uploaded by vehicles within the coverage of edge node $e$ at time $t$ is denoted by $\mathbf{K}_{e}^{t} = \{ k_{v}^{t}| \forall v \in \mathbf{V}_{e}^{t} \}$. 
As shown in Fig. \ref{fig_2}(b), each task $k_{v}^{t} \in \mathbf{K}_{e}^{t}$ can be either executed in edge node $e$ locally, or migrated to other edge nodes for processing.
The task offloading indicator is denoted by $q_{v, e}^{t}$, which indicates whether the task $k_{v}^{t}$ of vehicle $v$ is offloaded to the edge node $e$ at time $t$, and each task can only be offloaded in one edge node at least, i.e., $\sum_{\forall e \in \mathbf{E}} q_{v, e}^{t} = 1$.
Then, the set of tasks offloaded in the edge node $e$ can be represented by:
\begin{equation}
	\mathbf{K}_{q_e}^{t} = \left\{ k_{v}^{t} | q_{v, e}^{t} = 1, \forall v \in \mathbf{V}_{e^{\prime}}^{t}, \forall e^{\prime} \in \mathbf{E} \right\}
\end{equation}
which consists of local processing tasks uploaded by vehicles and tasks migrated from other edge nodes.
The computation resource (i.e., the CPU clock frequency) of task $k_{v}^{t} \in \mathbf{K}_{q_e}^{t}$ allocated by edge node $e$ is denoted by $c_{v, e}^{t}$.
The overall allocated computation resources cannot exceed the computation capability of edge node $e$, i.e., $ \sum_{\forall k_{v}^{t} \in {\mathbf{K}_{q_e}^{t} }} c_{v, e}^t \leq c_{e}$, where $c_e$ is the CPU clock frequency of edge node $e$.
Therefore, the execution time of task $a_{v}^{t}$ in edge node $e$ is denoted by $x_{v, e}^t$, which is computed by:
\begin{equation}
	x_{v, e}^t = \frac{ d_{k}  c_{k}}{c_{v, e}^t}
\end{equation}
where $d_{k}$ is the size of task $k_{v}^{t}$, and $c_{k}$ is the CPU cycles for processing one bit data of task $k_{v}^{t}$.

However, the task $k_{v}^{t}$ cannot be executed until the task data is received at the offloaded edge node $e$ when the task is requested by the vehicle $v$, which is without the radio coverage of edge node $e$.
Thus, we denote the wired transmission time of task $k_{v}^{t}$  transmitted by edge node $e$ and received at edge node $e^{\prime}$ by $w_{v, e}^{t}$, which is computed by:
\begin{equation}
	w_{v, e}^{t} = \begin{cases}
		0, &k_{v}^{t} \in \mathbf{K}_{e}^{t} \bigcap \mathbf{K}_{q_e}^{t}\\
		{d_{k}  \operatorname{dist}_{e, e^{\prime}}^{t}}  \zeta  / {z}, & k_{v}^{t} \in \mathbf{K}_{e}^{t} \bigcap \mathbf{K}_{q_{e^{\prime}}}^{t}\\
	\end{cases}
\end{equation}
\noindent where $\operatorname{dist}_{e, e^{\prime}} ^{t}$ is the distance between edge nodes $e$ and $e^{\prime}$, and $\zeta$ is a distance discount constant. 
The processing time of task $k_{v}^{t}$ in the edge node $e$ at time $t$ is denoted by $n_{v, e}^t$ and expressed by:
\begin{equation}
n_{v, e}^t= w_{v, e}^{t} + \sum_{\forall e^{\prime} \in \mathbf{E}} q_{v, e^{\prime}}^{t} x_{v, e^{\prime}}^t
\label{equation_execution_time}
\end{equation}
The processing time of task $k_{v}^{t}$ consists of the wired transmission time and execution time depending on the task offloading decisions.

\subsection{Cooperative Resource Optimization Problem}

The service time of task $k_v^t \in \mathbf{K}_{e}^{t}$ consists of the uploading time and processing time, which is denoted by $\psi_{v, e}^{t}$ and can be expressed by:
\begin{equation}
	\psi_{v, e}^{t} = m_{v, e}^{t} +  n_{v, e}^{t}
	\label{service time}
\end{equation}
The task $k_v^t$ is successfully serviced only if the service time is shorter than the deadline $t_k$.
Then, the service ratio of edge node $e$ is defined as the ratio between the number of successfully serviced tasks (i.e., be serviced before the deadline) and the number of requested tasks in the edge node $e$, which is denoted by $\Psi_{e}^{t}$ and represented by:
\begin{equation}
	\Psi_{e}^{t} = \frac{\sum_{\forall k_{v}^{t} \in \mathbf{K}_{e}^{t}} \mathbb{I} \left\{ \psi_{v, e}^{t} \leq t_{k} \right\} }{|\mathbf{K}_{e}^{t}|}
	\label{service ratio}
\end{equation}
\noindent where $|\mathbf{K}_{e}^{t}|$ is the number of tasks requested by vehicles within the coverage of edge node $e$, and $\mathbb{I} \left\{ \psi_{v, e}^{t} \leq t_{k} \right\}$ is a indicator function, and we have $\mathbb{I} \left\{ \psi_{v, e}^{t} \leq t_{k} \right\} =1$ if $\psi_{v, e}^{t} \leq t_{k}$, otherwise, $\mathbb{I} \left\{ \psi_{v, e}^{t} \leq t_{k} \right\} =0$.

Given a determined solution $(\mathbf{P}, \mathbf{Q}, \mathbf{C})$, where $\mathbf{P}$ denotes the determined V2I transmission power allocation, $\mathbf{Q}$ denotes the determined task offloading decisions, and $\mathbf{C}$ denotes the determined computation resource allocation, which is represented by:
\begin{equation}
\begin{cases}
\mathbf{P}&= \left \{ p_{v, e}^{t} \mid \forall v \in \mathbf{V}_{e}^t, \forall e \in \mathbf{E}, \forall t \in \mathbf{T}\right \}  \\
\mathbf{Q}&= \left \{ q_{v, e}^t \mid \forall v \in \mathbf{V}, \forall e \in \mathbf{E}, \forall t \in \mathbf{T} \right \} \\ 
\mathbf{C}&= \left \{ c_{v, e}^t \mid \forall v \in \mathbf{V}, \forall e \in \mathbf{E}, \forall t \in \mathbf{T} \right \}
\end{cases}
\end{equation}
This paper aims to maximize the sum of service ratios of edge nodes during the scheduling period by jointly optimizing task offloading decisions and heterogeneous resource allocation in NOMA-based VEC.
Thus, the cooperative resource optimization problem is formulated by:
\begin{equation}
	\begin{aligned}
		\operatorname{CRO}:&\max_{\mathbf{P}, \mathbf{Q}, \mathbf{C}} f_1= \sum_{\forall t \in \mathbf{T}} \sum_{ \forall e \in \mathbf{E}} \Psi_{e}^{t}  \\
		\text { s.t. }
        &\operatorname{C1}: \sum_{\forall v \in \mathbf{V}_{e}^{t}} p_{v, e}^{t} \leq p_{e}, \forall e \in \mathbf{E}, \forall t \in \mathbf{T}  \\
        &\operatorname{C2}: \sum_{\forall k_{v}^{t} \in {\mathbf{K}_{q_e}^{t} }} c_{v, e}^t \leq c_{e}, \forall e \in \mathbf{E}, \forall t \in \mathbf{T}\\
        &\operatorname{C3}: q_{v, e}^t \in \left \{0, 1\right \}, \forall v \in \mathbf{V}, \forall e \in \mathbf{E}, \forall t \in \mathbf{T}  \\
        &\operatorname{C4}: \sum_{\forall e \in \mathbf{E}} q_{v, e}^t = 1, \forall v \in \mathbf{V}, \forall t \in \mathbf{T} \\
      \end{aligned}
	\label{equ_objective}
\end{equation}

Constraint $\operatorname{C1}$ guarantees that the total transmission power allocated by the edge nodes cannot exceed the maximum power of V2I communications.
$\operatorname{C2}$ requires that the overall allocated computation resources cannot exceed the computation capacities of edge nodes.
Constraints $\operatorname{C3}$ and $\operatorname{C4}$ state that task offloading decision $q_{v, e}^t$ is an 0-1 integer variable, and each task can only be offloaded in one edge node at least.

\section{Proposed Solution}
As shown in Fig. \ref{fig_3}, the CRO is solved through decoupling the two subproblems, i.e., task offloading ($\mathcal{P}1$) and resource allocation ($\mathcal{P}2$). 
In particular, the $\mathcal{P}1$ is modeled as a non-cooperative game among edge nodes, which is proved as an EPG with the existence and convergence of NE.
To deal with $\mathcal{P}1$, we design the MAD4PG implemented at each edge node for task offloading to achieve the NE.
On the other hand, the $\mathcal{P}2$ is divided into two independent convex optimization problems.
We derive the optimal solution for heterogeneous resource allocation based on the gradient-based iterative method and KKT condition to handle $\mathcal{P}2$.
The interaction between the two solutions is described as follows.
First, the task offloading decisions are determined in advance based on the MAD4PG with the input of local system observation.
Then, the resource allocation is obtained via the optimal solution according to the task offloading decisions.
Further, in the NOMA-based VEC environment, the joint action of both task offloading and resource allocation is utilized to obtain the rewards for edge nodes via the designed potential function.
The procedure will be continued until the training for the MAD4PG is completed.

\begin{figure}
\centering
  \includegraphics[width=1\columnwidth]{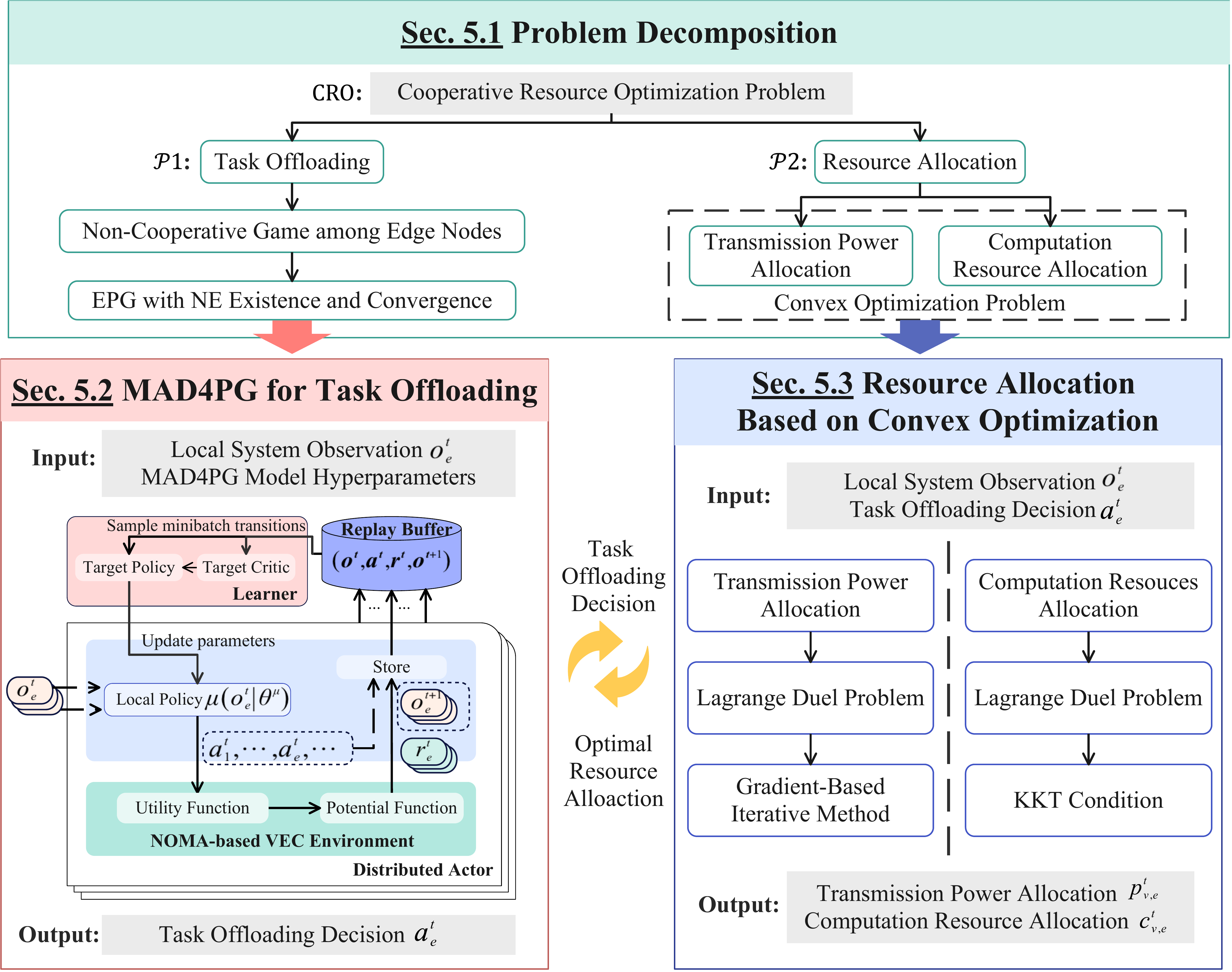}
  \caption{Overall of the proposed solution}
  \label{fig_3}
\end{figure} 

\subsection{Problem Decomposition}\label{Problem Decomposition}
In this section, we first decompose CRO into serval problems at each time slot.
Since the variables $\mathbf{P}^{t}$, $\mathbf{Q}^{t}$, and $ \mathbf{C}^{t}$ at time $t$ are independent of each other, and the four constraints are separable since the variables are not overlapped, the problem can be decomposed into two subproblems, which are formulated as follows.

\textit{1) Task Offloading:} the first subproblem $\mathcal{P}1$ with respect to $\mathbf{Q}^{t}$ concerns task offloading decisions of edge nodes, which is formulated by:
\begin{equation}
	\begin{aligned}
		\mathcal{P}1: &\max_{\mathbf{Q}^{t}} g_1= \sum_{ \forall e \in \mathbf{E}} \Psi_{e}^{t}  \\
		\text { s.t. }  
		&\operatorname{C5}: q_{v, e}^t \in \left \{0, 1\right \}, \forall v \in \mathbf{V}, \forall e \in \mathbf{E}  \\
        &\operatorname{C6}: \sum_{\forall e \in \mathbf{E}} q_{v, e}^t = 1, \forall v \in \mathbf{V} \\
	\end{aligned}
\end{equation}
Then, we model $\mathcal{P}1$ as a non-cooperative game among edge nodes, where edge nodes act as players to determine task offloading strategies independently.
The game model is represented as $\mathcal{G} = \left\{\mathbf{E}, \mathbb{S}, \left\{{U}_{e}\right\}_{\forall e \in \mathbf{E}} \right\}$, where $\mathbf{E}$ denotes the set of players;
$\mathbb{S}$ denotes the strategy space of the game, which is defined as the Cartesian products of all individual strategy sets of edge nodes, i.e., $\mathbb{S} = \mathbf{S}_{1} \times \ldots \times \mathbf{S}_{e} \times \ldots \times \mathbf{S}_{E}$, where $\mathbf{S}_{e}$ denotes the set of all possible strategies of edge node $e$.
Each element $\mathcal{S} \in \mathbb{S}$ is a strategy profile, and $\mathcal{S} = \left(\mathcal{S}_{1}, \ldots, \mathcal{S}_{e}, \ldots, \mathcal{S}_{E} \right)$, which can be rewritten into $\mathcal{S}=\left( \mathcal{S}_{e}, \mathcal{S}_{-e}\right)$, where $\mathcal{S}_{-e}$ denotes the joint strategy adopted by the opponents of edge node $e$, i.e., $\forall e^{\prime} \in \mathbf{E} \setminus \{e\}$. 
And $\mathcal{S}_{e}$ is the strategy of edge node $e$, which can be expressed by $\mathcal{S}_{e} = \left\{ q_{v, e}^t \mid \forall e \in \mathbf{E}, \forall v \in \mathbf{V}_{e}^{t} \right\}$;
${U}_{e}\left(\mathcal{S}\right)$ denotes the utility function of edge node $e$, which is defined as follows.
\begin{definition}
	The utility function of edge node $e$ denoted by ${U}_{e}\left(\mathcal{S}\right): \mathbb{S} \mapsto \mathbb{R}$ is defined as the sum of achieved service ratio of edge nodes under the strategy profile $\mathcal{S}$, where $\mathbb{R}$ is the set of real numbers.
	\begin{equation}
		{U}_{e}\left(\mathcal{S}\right) = \sum_{\forall e \in \mathbf{E}} \Psi_{e}^{t}
		\label{definition of Ue}
	\end{equation}
\end{definition}

Further, we prove that the non-cooperative game model $\mathcal{G}$ is an EPG with the existence and convergence of NE by giving a potential function as Eq. \ref{potential function}.
\begin{theorem}
Given a potential function of edge node $e$ as
\begin{equation}
	{F}_{e}\left(\mathcal{S}\right) = {U}_{e}\left(\mathcal{S}_{e}, \mathcal{S}_{-e}\right) - {U}_{e}\left(-\mathcal{S}_{e}, \mathcal{S}_{-e}\right)
	\label{potential function}
\end{equation}
the game $\mathcal{G}$ is an exact potential game.
\label{theorem_potential_game}
\end{theorem}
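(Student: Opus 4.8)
The plan is to verify the definition of an exact potential game head-on: I must exhibit a single potential function $\Phi:\mathbb{S}\mapsto\mathbb{R}$ such that, for every edge node $e$ and every unilateral deviation from $\mathcal{S}_e$ to $\mathcal{S}_e'$ with the opponents' profile $\mathcal{S}_{-e}$ held fixed,
\begin{equation}
U_e(\mathcal{S}_e,\mathcal{S}_{-e})-U_e(\mathcal{S}_e',\mathcal{S}_{-e})=\Phi(\mathcal{S}_e,\mathcal{S}_{-e})-\Phi(\mathcal{S}_e',\mathcal{S}_{-e}).
\end{equation}
The candidate is the function supplied in Eq. \ref{potential function}, which I read with $-\mathcal{S}_e$ denoting a \emph{fixed} reference action for player $e$ (for instance, the profile in which $e$ offloads none of its candidate tasks to itself) that does not depend on the current value of $\mathcal{S}_e$.

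First I would record the structural observation that drives the entire argument: by the utility defined in Eq. \ref{definition of Ue}, $U_e(\mathcal{S})=\sum_{\forall e\in\mathbf{E}}\Psi_e^t$ does not depend on the index of the deviating player, so $U_e\equiv U_{e'}$ for all $e,e'\in\mathbf{E}$. Hence $\mathcal{G}$ is an identical-interest (team) game. I would make this precise by tracing how a change in the offloading vector $\mathcal{S}_e=\{q_{v,e}^t\}$ propagates through the model: it alters the set $\mathbf{K}_{q_e}^t$, the wired-transfer terms $w_{v,e}^t$, and thereby the service times $\psi_{v,e}^t$ and the service ratios of both $e$ and any node receiving migrated tasks — but every one of these effects is already aggregated inside the single global sum $\sum_{\forall e\in\mathbf{E}}\Psi_e^t$.

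Then I would substitute the candidate into the right-hand side of the potential condition and telescope. Writing $F_e(\mathcal{S})=U(\mathcal{S}_e,\mathcal{S}_{-e})-U(\mathrm{ref}_e,\mathcal{S}_{-e})$, the difference collapses to
\begin{equation}
F_e(\mathcal{S}_e,\mathcal{S}_{-e})-F_e(\mathcal{S}_e',\mathcal{S}_{-e})=U(\mathcal{S}_e,\mathcal{S}_{-e})-U(\mathcal{S}_e',\mathcal{S}_{-e}),
\end{equation}
since the common reference term $U(\mathrm{ref}_e,\mathcal{S}_{-e})$ cancels. Because $U=U_e$, the right-hand side is exactly $U_e(\mathcal{S}_e,\mathcal{S}_{-e})-U_e(\mathcal{S}_e',\mathcal{S}_{-e})$, establishing the exact potential property. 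Equivalently, I could bypass the reference construction and take $\Phi=U$ outright; the identical-interest property makes the verification immediate, and the $F_e$ form is merely the baseline-normalized version that is convenient for reuse as the per-agent reward in the MAD4PG algorithm.

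The step I expect to be the main obstacle is not the algebra but pinning down the interaction semantics so the identical-interest claim is airtight. Specifically, I need to confirm that player $e$'s deviation affects no term of $\sum_{\forall e\in\mathbf{E}}\Psi_e^t$ through a channel invisible to $U_e$ — in particular that the migration coupling $w_{v,e}^t$ and the NOMA interference entering $\mathrm{SINR}_{v,e}^t$ in Eq. \ref{SINR} are bookkept symmetrically across all nodes, and that the reference action $-\mathcal{S}_e$ is genuinely independent of $\mathcal{S}_e$ so the cancellation above is valid. Once these points are settled, the exact potential property follows, and the existence and finite-improvement convergence to a Nash equilibrium are then inherited from the standard theory of finite exact potential games.
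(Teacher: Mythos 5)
Your proof is correct and takes essentially the same route as the paper's own appendix, which performs the identical telescoping computation: expand $F_e(\mathcal{S}'_e,\mathcal{S}_{-e})-F_e(\mathcal{S}_e,\mathcal{S}_{-e})$ and cancel $U_e(-\mathcal{S}_e,\mathcal{S}_{-e})-U_e(-\mathcal{S}'_e,\mathcal{S}_{-e})$, a cancellation that is valid for precisely the reason you flag, namely that the invalidated (reference) action is independent of $\mathcal{S}_e$. Your additional remarks — the identical-interest structure $U_e\equiv U$ and the option of taking $\Phi=U$ outright as a single global potential — make explicit (and slightly strengthen) what the paper leaves implicit, but do not constitute a different argument.
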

\noindent where ${U}_{e}\left(-\mathcal{S}_{e}, \mathcal{S}_{-e}\right)$ is the achievable utility when the strategy of edge node $e$ is not valid.
\begin{proof} See Appendix \ref{appendices_1}.
\end{proof}

\noindent In the game model $\mathcal{G}$, edge nodes attempt to achieve the NE \cite{chew2016potential} by maximizing their utility with conflicting interests.
\begin{definition}
	The strategy profile $\mathcal{S}^{*} \in \mathbb{S}$ is a pure-strategy Nash equilibrium \cite{chew2016potential} if and only if:
	\begin{equation}
		U_{e}\left(\mathcal{S}_{e}^{*}, \mathcal{S}_{-e}^{*}\right) \geq U_{e}\left(\mathcal{S}_{e}, \mathcal{S}_{-e}^{*}\right), \quad \forall \mathcal{S}_{e} \in \mathbf{S}_{e}, \forall e \in \mathbf{E}
	\end{equation}
\end{definition}
\begin{lemma}
	Given a potential function $F_{e}(\mathcal{S})$ as Eq. \ref{potential function}, the set of NE of the game $\mathcal{G}$ coincides with the set of NE for the game $\mathcal{G}^{F}=\left\{\mathbf{E}, \mathbb{S}, \left\{{F}_{e}\right\}_{\forall e \in \mathbf{E}} \right\}$, i.e.,
	\begin{equation}
		\mathcal{NE}(\mathcal{G}) \equiv \mathcal{NE}\left(\mathcal{G}^{F}\right)
	\end{equation}
	where $\mathcal{NE}$ denotes the Nash equilibrium set of a game.
	\label{lemma_neset}
\end{lemma}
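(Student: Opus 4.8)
The plan is to prove the two equilibrium sets coincide by showing that the best-response correspondence of every player is identical in the two games $\mathcal{G}$ and $\mathcal{G}^{F}$. Since a pure-strategy Nash equilibrium is, by definition, precisely a profile at which each player simultaneously plays a best response against its opponents, equality of the best-response correspondences immediately forces $\mathcal{NE}(\mathcal{G}) \equiv \mathcal{NE}(\mathcal{G}^{F})$. The leverage point is the defining identity $F_{e}(\mathcal{S}) = U_{e}(\mathcal{S}_{e}, \mathcal{S}_{-e}) - U_{e}(-\mathcal{S}_{e}, \mathcal{S}_{-e})$ of Eq. \ref{potential function}: when edge node $e$ optimizes over its own strategy with the opponent profile $\mathcal{S}_{-e}$ held fixed, the objectives $F_{e}(\cdot, \mathcal{S}_{-e})$ and $U_{e}(\cdot, \mathcal{S}_{-e})$ differ only by the subtracted term, which I will argue is constant in $\mathcal{S}_{e}$.

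First I would fix an arbitrary player $e$ and an arbitrary opponent profile $\mathcal{S}_{-e}$ and inspect the map $\mathcal{S}_{e} \mapsto U_{e}(-\mathcal{S}_{e}, \mathcal{S}_{-e})$. By the interpretation accompanying Eq. \ref{potential function}, $U_{e}(-\mathcal{S}_{e}, \mathcal{S}_{-e})$ is the total service ratio attained when edge node $e$'s strategy is rendered invalid, so that $e$ contributes nothing and the outcome is governed entirely by the remaining edge nodes' strategies $\mathcal{S}_{-e}$. Hence $U_{e}(-\mathcal{S}_{e}, \mathcal{S}_{-e}) = \kappa(\mathcal{S}_{-e})$ for some quantity $\kappa$ independent of $\mathcal{S}_{e}$, and therefore $F_{e}(\mathcal{S}_{e}, \mathcal{S}_{-e}) = U_{e}(\mathcal{S}_{e}, \mathcal{S}_{-e}) - \kappa(\mathcal{S}_{-e})$. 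Because the two objectives of $e$ differ only by this additive quantity that is constant in $\mathcal{S}_{e}$, they attain their maxima over $\mathbf{S}_{e}$ at exactly the same strategies, so $\arg\max_{\mathcal{S}_{e} \in \mathbf{S}_{e}} F_{e}(\mathcal{S}_{e}, \mathcal{S}_{-e})$ coincides with $\arg\max_{\mathcal{S}_{e} \in \mathbf{S}_{e}} U_{e}(\mathcal{S}_{e}, \mathcal{S}_{-e})$.

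With the best responses matched, I would finish directly through the equilibrium definition. If $\mathcal{S}^{*} \in \mathcal{NE}(\mathcal{G})$, then $U_{e}(\mathcal{S}_{e}^{*}, \mathcal{S}_{-e}^{*}) \geq U_{e}(\mathcal{S}_{e}, \mathcal{S}_{-e}^{*})$ for every $e$ and every $\mathcal{S}_{e} \in \mathbf{S}_{e}$; subtracting the $\mathcal{S}_{e}$-independent term $\kappa(\mathcal{S}_{-e}^{*})$ from both sides turns this into $F_{e}(\mathcal{S}_{e}^{*}, \mathcal{S}_{-e}^{*}) \geq F_{e}(\mathcal{S}_{e}, \mathcal{S}_{-e}^{*})$, which is exactly the equilibrium condition for $\mathcal{G}^{F}$, so $\mathcal{S}^{*} \in \mathcal{NE}(\mathcal{G}^{F})$. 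Adding $\kappa(\mathcal{S}_{-e}^{*})$ back reverses the implication and yields the opposite inclusion, establishing the claimed equality. This is consistent with Theorem \ref{theorem_potential_game}, since in the identical-interest game $\mathcal{G}$ the common utility already acts as an exact potential and $F_{e}$ is merely that potential shifted by a player-$e$-independent baseline.

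The step I expect to be the main obstacle is the rigorous justification that $U_{e}(-\mathcal{S}_{e}, \mathcal{S}_{-e})$ is genuinely independent of $\mathcal{S}_{e}$---that is, fixing the semantics of ``the strategy of edge node $e$ is not valid'' as a single null strategy whose induced total service ratio depends only on $\mathcal{S}_{-e}$. Everything afterward reduces to the elementary fact that adding a constant to an objective leaves its maximizers unchanged, so the only genuine care is needed at this modeling step, where I must confirm that the subtracted baseline is well defined and does not covertly reintroduce a dependence on $\mathcal{S}_{e}$ through the coupling present in the service-ratio expression \ref{service ratio}.
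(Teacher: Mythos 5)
Your proof is correct and takes essentially the same route as the paper's: the paper transfers the equilibrium inequalities between $\mathcal{G}$ and $\mathcal{G}^{F}$ via the exact-potential identity of Theorem \ref{theorem_potential_game}, which is precisely your observation that $F_{e}$ and $U_{e}$ differ by a term $\kappa(\mathcal{S}_{-e})$ independent of $\mathcal{S}_{e}$ — the same cancellation $U_{e}\left(-\mathcal{S}_{e}, \mathcal{S}_{-e}\right) = U_{e}\left(-\mathcal{S}^{\prime}_{e}, \mathcal{S}_{-e}\right)$ that the paper's proof of that theorem uses implicitly. The modeling caveat you flag (that the ``invalid strategy'' baseline depends only on $\mathcal{S}_{-e}$) is likewise assumed rather than proved in the paper, so your argument is no less rigorous than the original.
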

\begin{proof} See Appendix \ref{appendices_2}
\end{proof}
\noindent Finally, we prove the existence of NE of the game model $\mathcal{G}$ based on Lemma \ref{lemma_neset}.
\begin{theorem}
	Given a potential function $F_{e}(\mathcal{S})$ as Eq. \ref{potential function}, the game $\mathcal{G}$ has at least one pure-strategy NE.
	\label{Nash equilibrium existence} 
\end{theorem}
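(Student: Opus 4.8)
The plan is to exploit the exact potential game (EPG) structure established in Theorem \ref{theorem_potential_game} together with the finiteness of the strategy space, and then invoke the classical argument that any maximizer of the potential is a pure-strategy Nash equilibrium (NE). Concretely, since $\mathcal{G}$ is an EPG there is a common potential whose increments track the unilateral utility increments of every player, so maximizing this potential over a finite domain yields a profile from which no player can profitably deviate.

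First I would argue that the strategy space $\mathbb{S}$ is finite. This follows because $\mathbf{E}$ and $\mathbf{V}$ are finite and each offloading indicator $q_{v,e}^t$ is binary subject to $\sum_{\forall e \in \mathbf{E}} q_{v,e}^t = 1$, so each task admits at most $|\mathbf{E}|$ offloading choices and each edge node's strategy set $\mathbf{S}_e$ is finite; hence $\mathbb{S} = \mathbf{S}_{1} \times \cdots \times \mathbf{S}_{E}$ is finite. Second, by Theorem \ref{theorem_potential_game} and the potential in Eq. \ref{potential function}, $\mathcal{G}$ is an EPG, so unilateral deviations obey the exact-potential identity $U_{e}(\mathcal{S}_{e}, \mathcal{S}_{-e}) - U_{e}(\mathcal{S}_{e}', \mathcal{S}_{-e}) = \Phi(\mathcal{S}_{e}, \mathcal{S}_{-e}) - \Phi(\mathcal{S}_{e}', \mathcal{S}_{-e})$ for a common potential $\Phi$ guaranteed by the theorem. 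Third, because $\Phi$ is real-valued on the finite set $\mathbb{S}$, it attains a global maximum at some profile $\mathcal{S}^{*}$. Fourth, I would show $\mathcal{S}^{*}$ is a NE by contradiction: if some edge node $e$ had a profitable deviation $\mathcal{S}_{e}'$ with $U_{e}(\mathcal{S}_{e}', \mathcal{S}_{-e}^{*}) > U_{e}(\mathcal{S}_{e}^{*}, \mathcal{S}_{-e}^{*})$, the identity above would force $\Phi(\mathcal{S}_{e}', \mathcal{S}_{-e}^{*}) > \Phi(\mathcal{S}^{*})$, contradicting maximality. Finally, I would invoke Lemma \ref{lemma_neset} to transfer this equilibrium of the potential game back to $\mathcal{G}$, establishing that $\mathcal{NE}(\mathcal{G})$ is nonempty. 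An equivalent route is via the finite improvement property: every EPG has this property, and since any strict improvement strictly raises $\Phi$ while $\Phi$ takes only finitely many values, every improvement path terminates at a profile that is by construction a NE.

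The main obstacle I anticipate is not the maximizer argument itself, which is routine once finiteness is in hand, but the bookkeeping needed to reconcile the per-player form of the potential in Eq. \ref{potential function} with a single global potential $\Phi$ whose differences govern all players simultaneously. Because $F_{e}$ is written player-by-player, I would lean on Theorem \ref{theorem_potential_game} for the existence of a common ordinal potential and on Lemma \ref{lemma_neset} to guarantee that its maximizers coincide with the equilibria of $\mathcal{G}$; making this identification precise, rather than the existence mechanics, is where the care is required.
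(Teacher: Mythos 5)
Your proposal is correct and follows essentially the same route as the paper's proof: the potential attains a maximum over the strategy space $\mathbb{S}$, that maximizer is a pure-strategy NE of the potential game $\mathcal{G}^{F}$, and Lemma \ref{lemma_neset} transfers the equilibrium back to $\mathcal{G}$. Your write-up is in fact somewhat more careful than the paper's — you observe that $\mathbb{S}$ is finite (the paper only says ``closed and bounded'') and you spell out the maximizer-is-NE step by contradiction, which the paper leaves implicit.
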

\begin{proof} See Appendix \ref{appendices_3}
\end{proof}
\noindent On the other hand, due to the limited strategy space $\mathbb{S}$, the NE can converge in a finite number of steps. 
We establish the $\epsilon$-improvement path and $\epsilon$-equilibrium \cite{chew2016potential}, which is a strategy profile approximately close to an actual NE, then prove the convergence of NE.
\begin{definition} 
	A path $\rho=\left(\mathcal{S}^{0}, \mathcal{S}^{1}, \mathcal{S}^{2}, \ldots\right)$ is an $\epsilon$-improvement path \cite{chew2016potential} if in each step $i$, the utility of edge node $e$ is improved with the value $\epsilon$, i.e., $U_{e}\left(\mathcal{S}^{i+1}\right) > U_{e}\left(\mathcal{S}^{i}\right) + \epsilon, \exists \epsilon \in \mathbb{R}_{+}, \forall i$.
\end{definition}
\begin{definition}
	The strategy profile $\mathcal{\hat{S}} \in \mathbb{S}$ is an $\epsilon$-equilibrium \cite{chew2016potential} if and only if $\exists \epsilon \in \mathbb{R}_{+}$, and:
	\begin{equation}
		U_{e}\left(\mathcal{\hat{S}}_{e}, \mathcal{\hat{S}}_{-e}\right) \geq U_{e}\left(\mathcal{S}_{e}, \mathcal{\hat{S}}_{-e}\right) - \epsilon, \quad \forall \mathcal{S}_{e} \in \mathbf{S}_{e}, \forall e \in \mathbf{E}
	\end{equation}
\end{definition}
\begin{theorem}
	For the game $\mathcal{G}$, every $\epsilon$-improvement path is finite, and its endpoint is an $\epsilon$-equilibrium, which is a refinement of the original NE.
	\label{Nash convergence}
\end{theorem}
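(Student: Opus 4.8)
The plan is to recast the theorem as a monotone-potential argument. By Eq.~\ref{definition of Ue} every edge node shares the identical payoff $U_e(\mathcal{S}) = \sum_{\forall e \in \mathbf{E}} \Psi_e^t$, so $\mathcal{G}$ is a common-interest game and this common payoff is itself an exact potential, consistent with Theorem~\ref{theorem_potential_game}; I would use it as the potential driving the improvement dynamics. The proof then decomposes into three claims, which is essentially the finite-improvement-property argument specialized to $\epsilon$-improvements: (i) every $\epsilon$-improvement path is finite; (ii) the endpoint of a maximal $\epsilon$-improvement path is an $\epsilon$-equilibrium; and (iii) for small enough $\epsilon$ this endpoint is in fact an exact NE of $\mathcal{G}$, which makes precise the sense in which it refines the original equilibrium.

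For (i), I would consider an $\epsilon$-improvement path $\rho = (\mathcal{S}^0, \mathcal{S}^1, \ldots)$. At each step some player $e$ unilaterally deviates with $U_e(\mathcal{S}^{i+1}) > U_e(\mathcal{S}^i) + \epsilon$; since the payoff is common, the potential value increases by strictly more than $\epsilon$ at every step and is therefore strictly monotone along $\rho$. Because each service ratio obeys $\Psi_e^t \in [0,1]$, the potential is confined to the bounded interval $[0, E]$. A strictly increasing sequence whose increments exceed the fixed amount $\epsilon > 0$ cannot remain bounded for infinitely many steps, so the path has at most $\lceil E/\epsilon \rceil$ steps and is finite.

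For (ii), I would take $\hat{\mathcal{S}}$ to be the endpoint of a maximal $\epsilon$-improvement path. Maximality means that no player admits an $\epsilon$-improving unilateral deviation, i.e. $U_e(\hat{\mathcal{S}}_e, \hat{\mathcal{S}}_{-e}) \geq U_e(\mathcal{S}_e, \hat{\mathcal{S}}_{-e}) - \epsilon$ holds for every $\mathcal{S}_e \in \mathbf{S}_e$ and every $e \in \mathbf{E}$, which is exactly the defining condition of an $\epsilon$-equilibrium.

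The main obstacle is part (iii), giving the ``refinement'' claim a precise meaning. The key is that the strategy space $\mathbb{S}$ is finite, so the utility values form a finite set; let $\delta > 0$ denote the smallest positive difference between two distinct utility values. Whenever $\epsilon < \delta$, the $\epsilon$-equilibrium inequality forces $U_e(\mathcal{S}_e, \hat{\mathcal{S}}_{-e}) - U_e(\hat{\mathcal{S}}) \leq \epsilon < \delta$ for every deviation; since a positive gap between two utility values is at least $\delta$, this difference cannot be positive, leaving $U_e(\hat{\mathcal{S}}_e, \hat{\mathcal{S}}_{-e}) \geq U_e(\mathcal{S}_e, \hat{\mathcal{S}}_{-e})$, the exact NE condition. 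By Lemma~\ref{lemma_neset} the endpoint is then a genuine NE of $\mathcal{G}$, so as $\epsilon$ shrinks the $\epsilon$-equilibria produced by the improvement dynamics refine to an actual Nash equilibrium, completing the argument.
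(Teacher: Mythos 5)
Your proof is correct, and its engine is the same as the paper's --- a potential that is bounded on $\mathbb{S}$ cannot gain more than a fixed $\epsilon>0$ per step infinitely often --- but you execute it differently and more completely. The paper argues by contradiction using its per-player potential $F_{e}$ from Eq.~\ref{potential function}: an infinite $\epsilon$-improvement path would force $F_{e}(\mathcal{S}^{i}) > F_{e}(\mathcal{S}^{0}) + i\epsilon'$, contradicting $F^{\max}=\sup_{\mathcal{S}\in\mathbb{S}}F_{e}(\mathcal{S})<\infty$, and it then simply asserts that the endpoint is an $\epsilon$-equilibrium. You instead observe that $\mathcal{G}$ is a common-interest game and use the shared utility $U$ itself as the potential, which is cleaner: it yields the explicit length bound $\lceil E/\epsilon\rceil$ from $\Psi_{e}^{t}\in[0,1]$, and it avoids the paper's loose step of converting a gain of $\epsilon$ in the deviator's $U_{e}$ into a gain of an unspecified $\epsilon'$ in a fixed player's $F_{e}$ --- a step that needs justification when the deviator at step $i$ is some $e'\neq e$, since $F_{e}$ is only an exact potential for player $e$'s own deviations. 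You also supply two things the paper leaves implicit: a one-line verification that the endpoint of a maximal path satisfies the $\epsilon$-equilibrium inequality, and a precise reading of the vague ``refinement'' clause, namely that for $\epsilon$ below the minimal positive utility gap $\delta$ of the finite game, any $\epsilon$-equilibrium is an exact NE. The only blemish is the closing appeal to Lemma~\ref{lemma_neset}, which is unnecessary: your inequality $U_{e}\left(\hat{\mathcal{S}}_{e}, \hat{\mathcal{S}}_{-e}\right) \geq U_{e}\left(\mathcal{S}_{e}, \hat{\mathcal{S}}_{-e}\right)$ is already the NE condition for $\mathcal{G}$ itself, with no need to pass through the potential game $\mathcal{G}^{F}$.
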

\begin{proof} See Appendix \ref{appendices_4}
\end{proof}

\textit{2) Resource Allocation:} the second subproblem $\mathcal{P}2$ with respect to $\mathbf{P}^{t}$ concerns on transmission power allocation and $\mathbf{C}^{t}$ concerns on computation resource allocation, which is formulated as follows.
\begin{equation}
	\begin{aligned}
		\mathcal{P}2: &\min_{\mathbf{P}^{t}, \mathbf{C}^{t}} g_2= \sum_{ \forall e \in \mathbf{E}} \sum_{\forall k_{v}^{t} \in \mathbf{K}_{e}^{t}} \left( m_{v, e}^{t} +  n_{v, e}^{t} \right)  \\
		\text { s.t. }
        &\operatorname{C7}: \sum_{\forall v \in \mathbf{V}_{e}^{t}} p_{v, e}^{t} \leq p_{e}, \forall e \in \mathbf{E}  \\
        &\operatorname{C8}: \sum_{\forall k_{v}^{t} \in {\mathbf{K}_{q_e}^{t} }} c_{v, e}^t \leq c_{e}, \forall e \in \mathbf{E}\\
	\end{aligned}
\label{equ_p4}
\end{equation}

\noindent It is observed that the variables $\mathbf{P}^{t}$ and $\mathbf{C}^{t}$ in Eq. \ref{equ_p4} are independent of each other. 
The constraints $\operatorname{C7}$ and $\operatorname{C8}$ are separable since the variables are not overlapped.
Therefore, the subproblem $\mathcal{P}2$ can be divided into two independent problems, namely, transmission power allocation and computation resource allocation, which are formulated as follow.

\textbf{Transmission Power Allocation:} it is with respect to the variables $\mathbf{P}^{t}$ concerns on transmission power allocation, which is formulated as follows.
\begin{equation}
	\begin{aligned}
		\mathcal{P}3: &\min_{\mathbf{P}^{t}} g_3= \sum_{ \forall e \in \mathbf{E}} \sum_{\forall k_{v}^{t} \in \mathbf{K}_{e}^{t}}  \frac{d_{k}}{b  \log _{2}\left(1+\mathrm{SINR}_{v, e}^t\right)} \\
		&\text { s.t. } \operatorname{C7}: \sum_{\forall v \in \mathbf{V}_{e}^{t}} p_{v, e}^{t} \leq p_{e}, \forall e \in \mathbf{E}\\
	\end{aligned}
\label{equ_p5}
\end{equation}
It is observed that the variables related to the edge nodes are independent.
Thus, the $\mathcal{P}3$ can be further divided into multiple simple problems, where each one is only related to an edge node $e$. 
\begin{equation}
	\begin{aligned}
		\mathcal{P}4: & \max_{\mathbf{P}_{e}^{t}}  g_3^e= \sum_{\forall k_{v}^{t} \in \mathbf{K}_{e}^{t}} {b  \log _{2}\left(1+\mathrm{SINR}_{v, e}^t\right)} \\
		&\text { s.t. } \operatorname{C9}: \sum_{\forall v \in \mathbf{V}_{e}^{t}} p_{v, e}^{t} \leq p_{e}  \\
	\end{aligned}
\label{equ_p6}
\end{equation}
However, the $\mathcal{P}4$ is nonconvex due to the intra-edge and inter-edge interferences.
Then, we apply some approximations to convert $\mathcal{P}4$ into a convex problem.
In particular, a lower bound of $g_3^e$ can be obtained and formulated by \cite{papandriopoulos2006low}:
\begin{equation}
	g_3^e \geq \overline{g_3^e} = \sum_{\forall k_{v}^{t} \in \mathbf{K}_{e}^{t}} {b \left( \xi_{v, e}^{t} \log _{2}\mathrm{SINR}_{v, e}^t + \omega_{v, e}^{t} \right) }
	\label{bar g_4}
\end{equation}
where $\xi_{v, e}^{t}$ and $\omega_{v, e}^{t}$ are fixed and given by:
\begin{equation}
	\begin{aligned}
		\xi_{v, e}^{t} &= \overline{\mathrm{SINR}}_{v, e}^t \bigg/ ( 1 + \overline{\mathrm{SINR}}_{v, e}^t ) \\
		\omega_{v, e}^{t} &= \log _{2} (1+ \overline{\mathrm{SINR}}_{v, e}^t) - \frac{\overline{\mathrm{SINR}}_{v, e}^t}{1 + \overline{\mathrm{SINR}}_{v, e}^t} \log _{2}\overline{\mathrm{SINR}}_{v, e}^t
	\end{aligned}
\end{equation}
The lower bound is tight if ${\mathrm{SINR}}_{v, e}^t =\overline{\mathrm{SINR}}_{v, e}^t$.
Thus, the $\mathcal{P}4$ can be re-expressed in the relaxation as:
\begin{equation}
		\begin{aligned}
		\mathcal{P}5: & \max_{\mathbf{P}_{e}^{t}}  \overline{g_3^e}= \sum_{\forall k_{v}^{t} \in \mathbf{K}_{e}^{t}} {b \left( \xi_{v, e}^{t} \log _{2}\mathrm{SINR}_{v, e}^t + \omega_{v, e}^{t} \right) } \\
		&\text { s.t. } \operatorname{C9}: \sum_{\forall v \in \mathbf{V}_{e}^{t}} p_{v, e}^{t} \leq p_{e}  \\
	\end{aligned}
\label{equ_p7}
\end{equation}
Nevertheless, the $\mathcal{P}5$ is still nonconvex because the objective is not concave in $\mathbf{P}_{e}^{t}$. 
Given a new variable $\widetilde{p_{v, e}^t} = \log _{2} {p}_{v, e}^t$, the $\mathcal{P}5$ can be transformed as follows.
\begin{equation}
		\begin{aligned}
		\mathcal{P}6: & \max_{\widetilde{\mathbf{{P}}_{e}^{t}}}  \widetilde{g_3^{e}}= \sum_{\forall k_{v}^{t} \in \mathbf{K}_{e}^{t}} {b ( \xi_{v, e}^{t} \log _{2}\mathrm{\widetilde{SINR}}_{v, e}^t + \omega_{v, e}^{t} ) } \\
		&\text { s.t. } \operatorname{C10}: \sum_{\forall v \in \mathbf{V}_{e}^{t}} 2^{\widetilde{p_{v, e}^t}} \leq p_{e}  \\
	\end{aligned}
\label{transmission power allocation convex problem}
\end{equation}
where $\log _{2}\mathrm{\widetilde{SINR}}_{v, e}^t$ is given by:
\begin{equation}
	\begin{aligned}
		\log _{2}\mathrm{\widetilde{SINR}}_{v, e}^t &= \widetilde{p_{v, e}^t} + \log _{2} |h_{v, e}^t| ^{2} - \log _{2} \left( \sum\limits_{\forall v^{\prime} \in \mathbf{V}_{h_{v, e}}^{t}} |h_{v^{\prime}, e}^t|^2 2^{\widetilde{p_{v^{\prime}, e}^{t}}} \right. \\
		&+ \left. \sum\limits_{\forall e^{\prime} \in \mathbf{E} / \{e\}} \sum\limits_{\forall v^{\prime} \in \mathbf{V}_{e^{\prime}}^{t}} |h_{v^{\prime}, e}^t|^2 2^{\widetilde{p_{v^{\prime}, e^{\prime}}^{t}}} + N_{0}\right)
	\end{aligned}
\end{equation}
\noindent Thus, the $\mathcal{P}6$ is a standard concave maximization problem as well as a convex optimization problem since each constraint is a sum of convex exponentials, and each term in the objective sum is concave.

\textbf{Computation Resource Allocation:} it is with respect to the variables $\mathbf{C}^{t}$ concerns computation resource allocation, which is formulated as follows.
\begin{equation}
	\begin{aligned}
		\mathcal{P}7: &\min_{\mathbf{C}^{t}} g_4= \sum_{ \forall e \in \mathbf{E}} \sum_{\forall k_{v}^{t} \in \mathbf{K}_{e}^{t}} ( w_{v, e}^{t} + \sum_{\forall e^{\prime} \in \mathbf{E}} q_{v, e^{\prime}}^{t} x_{v, e^{\prime}}^t) \\
		&\text { s.t. } \operatorname{C8}:  \sum_{\forall k_{v}^{t} \in {\mathbf{K}_{q_e}^{t} }} c_{v, e}^t \leq c_{e}, \forall e \in \mathbf{E}\\
	\end{aligned}
\label{equ_p10}
\end{equation}
Similar to the $\mathcal{P}3$ of Eq. \ref{equ_p5}, the $\mathcal{P}7$ can be further divided into multiple simple problems, where each is only related to an edge node $e$ and formulated as follows.
\begin{equation}
	\begin{aligned}
		\mathcal{P}8: &\min_{\mathbf{C}_{e}^{t}} g_4^e= \sum_{\forall a_{v}^{t} \in {\mathbf{K}_{q_e}^{t} }}   x_{v, e}^t \\
		&\text { s.t. } \operatorname{C11}:  \sum_{\forall k_{v}^{t} \in {\mathbf{K}_{q_e}^{t} }} {c_{v, e}^t} \leq c_{e}\\
	\end{aligned}
\label{computation resource allocation convex problem}
\end{equation}
\noindent where ${\mathbf{C}_e^t}$ represents the variables in ${\mathbf{C}^{t}}$ associated with edge node $e$.
Thus, the $\mathcal{P}8$ is a convex optimization problem, as the objective in Eq. \ref{computation resource allocation convex problem} is convex, and the constraint is linear.

\subsection{MAD4PG for Task Offloading} \label{MAD4PG for Task Offloading}
The MAD4PG model consists of several distributed actors, a learner, a NOMA-based VEC environment, and a replay buffer.
The primary components of MAD4PG are designed as follows.

\textit{1) System State:} The local observation of system state in the edge node $e$ at time $t$ is denoted by:
	\begin{equation}
		\boldsymbol{o}_{e}^{t}=\left\{e, t, \mathbf{Dist}_{\mathbf{V}_{e}^{t}}, \mathbf{D}_{\mathbf{K}_{e}^{t}}, \mathbf{C}_{\mathbf{K}_{e}^{t}}, \mathbf{T}_{\mathbf{K}_{v}^{t}}\right\}
	\end{equation} 
	\noindent where $e$ is the edge node index; 
	$t$ is the time slot index; 
	$\mathbf{Dist}_{\mathbf{V}_{e}^{t}}$ represents the set of distances between edge node $e$ and vehicle $v \in \mathbf{V}_{e}^{t}$ at time $t$,
	and $\mathbf{D}_{\mathbf{K}_{e}^{t}}$, $\mathbf{C}_{\mathbf{K}_{e}^{t}}$, and $\mathbf{T}_{\mathbf{K}_{v}^{t}}$ represent the set of data size, required computation resources, and deadline of task $k_{v}^{t} \in \mathbf{K}_{e}^{t}$ in edge node $e$ at time $t$, respectively.
	Thus, the system state at time $t$ can be denoted by $\boldsymbol{o}^{t}=\left\{\boldsymbol{o}_{1}^{t}, \ldots, \boldsymbol{o}_{e}^{t}, \ldots, \boldsymbol{o}_{E}^{t}\right\}$.

\textit{2) Action Space:} The action space of edge node $e$ consists of the offloading decision of tasks requested by vehicle $v \in \mathbf{V}_{e}^{t}$, which is denoted by:
	\begin{equation}
		\boldsymbol{a}_{e}^{t} = \left\{ q_{v, e^{\prime}}^t \mid \forall e^{\prime} \in \mathbf{E}, \forall v \in \mathbf{V}_{e}^{t} \right\}
	\end{equation}
	\noindent where $q_{v, e^{\prime}}^t \in \{0, 1\}$ indicates whether task $k_{v}^t$ is offloaded in the edge node $e^{\prime}$.
	The set of edge node actions is denoted by $\boldsymbol{a}^{t} = \left\{\boldsymbol{a}_{e}^{t}\mid \forall e \in \mathbf{E} \right\}$.
	
\textit{3) Reward Function:} In the game model, the objective of each edge node is to maximize its utilities.
	Therefore, the reward function of the system is defined as the achieved utilities of edge nodes at time $t$, which is represented by:
	\begin{equation}
		r\left(\boldsymbol{a}^{t} \mid \boldsymbol{o}^{t}\right)= {U}_{e}\left(\mathcal{S}_{e}, \mathcal{S}_{-e}\right) = \sum_{\forall e \in \mathbf{E}} \Psi_{e}^{t}
		\label{system reward}
	\end{equation}
	Further, the potential function of the game $\mathcal{G}$ is adopted as the reward of edge node $e$ with action $\boldsymbol{a}_{e}^{t}$ in the system state $\boldsymbol{o}^{t}$.
	\begin{equation}
		r_{e}^{t} = r\left(\boldsymbol{a}^{t} \mid \boldsymbol{o}^{t}\right)-r\left(\boldsymbol{a}_{-e}^{t} \mid \boldsymbol{o}^{t}\right)
		\label{edge_reward}
	\end{equation}
	\noindent where $r\left(\boldsymbol{a}_{-e}^{t} \mid \boldsymbol{o}^{t}\right)$ is the achieved system reward without the contribution of edge node $e$, and it can be obtained by setting null action set for edge node $e$.
	The set of rewards of edge nodes is denoted by $\boldsymbol{r}^{t} = \{r_{1}^{t}, \ldots, r_{e}^{t}, \ldots, r_{E}^{t}\}$.
	In the MAD4PG, the objective of each edge nodes $e \in \mathbf{E}$ is to maximize the expected return, which is represented by $R_{e}^{t} = \sum_{i \geq 0} \gamma^{i} r_{e}^{t+i}$, where $\gamma$ is the discount.

In the beginning of MAD4PG, the parameter of the local policy and critic networks are randomly initialized in the learner, which are denoted by $\theta^{\mu}$ and $\theta^{Q}$, respectively.
Then, the parameters of target policy and critic networks are initialized as the same as the corresponding local network, which are denoted by $\theta^{\mu^{\prime}}$ and $\theta^{Q^{\prime}}$, respectively.
\begin{equation}
	\begin{aligned}
		\theta^{\mu^{\prime}} \leftarrow \theta^{\mu}, \theta^{Q^{\prime}} \leftarrow \theta^{Q}\\
	\end{aligned}
\end{equation}
And the replay buffer $\mathcal{B}$ is initialized with a maximum size $|\mathcal{B}|$ to store replay experiences.
The procedure of MAD4PG is shown in Algorithm 1.

\begin{algorithm}[t]
	\caption{MAD4PG}
	Initialize the network weights;\\
	Initialize the replay buffer $\mathcal{B}$;\\
	Launch $J$ distributed actors and replicate network weights to the actors;\\
	\For{iteration $= 1$ to max-iteration-number}{
		\For{$t = 1$ to $T$}{
			\For{edge node $e=1$ to $E$}{
				Sample $M$ transitions of length $N$ from $\mathcal{B}$ randomly;\\
				Construct the target distributions;\\
				Compute the policy and critic network loss;\\
				Update the local policy and local critic networks;
			}
			\If{$t \mod t_{\operatorname{tgt}} = 0$}{
				Update the target networks;\\
			}
			\If{$t \mod t_{\operatorname{act}} = 0$}{
				Replicate network weights to the distributed actors;
			}
		}
	}
\end{algorithm}

On the other hand, there are $J$ distributed actors, which are launched to produce the replay experiences by interacting with the environment concurrently.
The parameters of the local policy network in the $j$-th actor are replicated from the local policy network of the learner, which are denoted by $\theta^{\mu}_{j}$. 
The initialized system state of each iteration is denoted by $\boldsymbol{o}^{0}$.
The task offloading action of edge node $e$ in the $j$-th actor at time $t$ is obtained based on the local observation of the system state:
\begin{equation}
	\boldsymbol{a}_{e}^{t}={\mu}\left(\boldsymbol{o}_{e}^{t} \mid \theta^{\mu}_{j}\right)+\epsilon  \mathcal{N}_{t}
\end{equation}
\noindent where $\mathcal{N}_{t}$ is an exploration noise to increase the diversity of edge actions, and $\epsilon$ is an exploration constant.
Then, the actions of edge nodes $\boldsymbol{a}^{t}$ are executed in the NOMA-based VEC environment, and the reward of each edge node can be obtained according to Eq. \ref{edge_reward}.
Finally, the interaction experiences including the system state $\boldsymbol{o}^{t}$, edge node actions $\boldsymbol{a}^{t}$, rewards of edge nodes $\boldsymbol{r}^{t}$, and next system state $\boldsymbol{o}^{t+1}$ are stored into the replay buffer $\mathcal{B}$.
The iteration will continue until the learner is finished.
The procedure of distributed actors is shown in Algorithm 2.

\begin{algorithm}[t]
	\caption{The $j$-th Distributed Actor}	
	\While{Learner is not finished}{
		Initialize a random process $\mathcal{N}$ for exploration;\\
		Receive the initial system state $\boldsymbol{o}_{1}$;\\
		\For{$t = 1$ to $T$}{
			\For{edge node $e=1$ to $E$}{
				Receive a local observation $\boldsymbol{o}_{e}^{t}$;\\
				Select a action $\boldsymbol{a}_{e}^{t}=\boldsymbol{\mu}\left(\boldsymbol{o}_{e}^{t} \mid \theta^{\mu}_{j}\right)+\mathcal{N}_{t}$;\\
			}
			Receive the reward $\boldsymbol{r}^{t}$ and the next system state $\boldsymbol{o}^{t+1}$;\\
			Store $\left(\boldsymbol{o}^{t}, \boldsymbol{a}^{t}, \boldsymbol{r}^{t}, \boldsymbol{o}^{t+1}\right)$ into replay buffer $\mathcal{B}$;
		}
	}
\end{algorithm}

A minibatch of $M$ transitions of length $N$ is sampled from replay buffer $\mathcal{B}$ to train the policy and critic networks of the learner. 
The transition of the $M$ minibatch is denoted by $\left(\boldsymbol{o}^{i:i+N}, \boldsymbol{a}^{i:i+N-1}, \boldsymbol{r}^{i:i+N-1}\right)$.
The target distribution of edge node $e$ is denoted by $Y_e^i$, which is computed by:
\begin{equation}
	Y_e^{i} = \sum_{n=0}^{N-1} \left( \gamma^{n} r_{e}^{i+n}\right)+\gamma^{N} Q^{\prime}\left(\boldsymbol{o}_{e}^{i+N}, \boldsymbol{a}^{i+N} \mid \theta^{Q^{\prime}} \right)
\end{equation}
\noindent where $\boldsymbol{a}^{i+N} = \{ \boldsymbol{a}_{1}^{i+N}, \ldots, \boldsymbol{a}_{e}^{i+N}, \ldots, \boldsymbol{a}_{E}^{i+N} \}$, and $\boldsymbol{a}_{e}^{i+N}$ is obtained via the target policy network, i.e., $\boldsymbol{a}_{e}^{i+N} = \mu^{\prime}(\boldsymbol{o}_{e}^{i+N} \mid \theta^{\mu^{\prime}})$.
The loss function of the critic network is represented by:
\begin{equation}
	{L}\left(\theta^{Q}\right)=\frac{1}{M} \sum_{i} \frac{1}{E} \sum_{e} \left(Y_e^{i}-Q\left(\boldsymbol{o}_{e}^{i}, \boldsymbol{a}^{i} \mid \theta^{Q}\right)\right)^{2}
\end{equation}
The parameters of the policy network are updated via policy gradient.
\begin{equation}
	\nabla_{\theta^{\mu}} \mathcal{J} = \frac{1}{M} \sum_{i} \frac{1}{E} \sum_{e} \nabla_{\boldsymbol{a}_{e}^{i}} Q\left(\boldsymbol{o}_{e}^{i}, \boldsymbol{a}^{i} \mid \theta^{Q}\right) \nabla_{\theta^{\mu}} \mu\left(\boldsymbol{o}_{e}^{i} \mid \theta^{\mu}\right)
\end{equation}
The parameters of the local policy network and local critic network are updated with the learning rate $\alpha$ and $\beta$.
Finally, the edge nodes update the parameters of target networks if $t \mod t_{\operatorname{tgt}} = 0$, where $t_{\operatorname{tgt}}$ is the target network parameter updating period.
\begin{equation}
	\begin{aligned}
			\theta^{\mu^{\prime}} &\leftarrow n \theta^{\mu}+(1-n)  \theta^{\mu^{\prime}}\\
			\theta^{Q^{\prime}} &\leftarrow n  \theta^{Q}+(1-n) \theta^{Q^{\prime}}
	\end{aligned}
\end{equation}
\noindent with $n \ll 1$.
The network parameter of policy network at $j$-th actor is also updated periodically, i.e., when $t \mod t_{\operatorname{act}} = 0$, where $t_{\operatorname{act}}$ is the network parameter updating period of the distributed actors.
\begin{equation}
	\theta_{j}^{\mu} \leftarrow \theta^{\mu^{\prime}}, \forall j
\end{equation}
where $\theta_{j}^{\mu}$ represents the parameters of the local policy network in the $j$-th distributed actor.
	
\subsection{Resource Allocation Based on Convex Optimization} \label{Resource Allocation Based on Convex Optimization}
\textit{1) Transmission Power Allocation :} To solve the convex optimization problem $\mathcal{P}6$, we first exploit the Lagrange dual method \cite{boyd2004convex} by introducing a Lagrange multiplier $\lambda_{e}^{t}$ into $\mathcal{P}6$.
Then, the Lagrange function is obtained by:
\begin{equation}
	\mathcal{L}(\widetilde{\mathbf{{P}}_{e}^{t}}, {\lambda}_{e}^{t} ) = \widetilde{g_3^{e}} -  {\lambda}_{e}^{t} (\sum_{\forall v \in \mathbf{V}_{e}^{t}} 2^{\widetilde{p_{v, e}^{t}}} - p_{e} )
	\label{Lagrange function}
\end{equation}
Further, the dual problem of $\mathcal{P}6$ is expressed as:
\begin{equation}
		\begin{aligned}
		\mathcal{P}9: & \min_{{\lambda}_{e}^{t}} \max_{\widetilde{\mathbf{{P}}_{e}^{t}}}  g_5 = \mathcal{L}(\widetilde{\mathbf{{P}}_{e}^{t}}, {\lambda}_{e}^{t} )\\
		&\text { s.t. } \operatorname{C12}: \mathbf{\lambda}_{e}^{t} \geq 0  \\
	\end{aligned}
\label{equ_p9}
\end{equation}
It is noted that the $\mathcal{P}9$ can be decomposed into a two-layer optimization problem.
The inner layer is represented as an optimization problem of $\widetilde{\mathbf{{P}}_{e}^{t}}$ with fixed ${\lambda}_{e}^{t}$, and the outer layer is represented as an optimization problem of ${\lambda}_{e}^{t}$ with fixed $\widetilde{\mathbf{{P}}_{e}^{t}}$.
In the outer layer, the dual variable ${\lambda}_{e}^{t}$ is iteratively updated through gradient descent.
\begin{equation}
	\mathbf{\lambda}_{e}^{t, (i+1)} = \max\{0, \mathbf{\lambda}_{e}^{t, (i)} + \sigma (\sum_{\forall v \in \mathbf{V}_{e}^{t}} 2^{\widetilde{p_{v, e}^{t}}} - p_{e} )\}
\end{equation}
where $\widetilde{p_{v, e}^{t}}$ is fixed; $\sigma$ is a sufficiently small constant, and $i$ is an iteration number.
Further, the inner dual maximization can be resolved by finding the stationary point of the Lagrange function in Eq. \ref{Lagrange function} with respect to $\widetilde{\mathbf{{P}}_{e}^{t}}$ and with fixed ${\lambda}_{e}^{t}$.
\begin{equation}
\frac{\partial \mathcal{L}\left(\widetilde{\mathbf{{P}}_{e}^{t}}, \mathbf{\lambda}_{e}^{t} \right)}{\partial \widetilde{p_{v, e}^{t}}}= b  \xi_{v, e}^{t}  - p_{v, e}^{t}(\lambda_{e}^{t} +\sum\limits_{\forall v^{\prime} \in \mathbf{V}_{h_{v, e}}^{t}} b  \xi_{v, e}^{t} |h_{v, e}^t|^2 \frac{\mathrm{SINR}_{v^{\prime}, e}^t}{|h_{v^{\prime}, e}^t| ^{2} p_{v^{\prime}, e}^{t}}) =0
\end{equation}
where the partial derivative is transformed back to the $\mathbf{P}_{e}^{t}$-space. 
Thus, the fixed-point equation can be formulated, and the transmission power of vehicle $v$ is updated by:
\begin{equation}
p_{v, e}^{t, {(i+1)}}=\frac{b \xi_{v, e}^{t}}{\lambda_{e}^{t,(i)}+\sum\limits_{\forall v^{\prime} \in \mathbf{V}_{h_{v, e}}^{t}} b  \xi_{v, e}^{t}|h_{v, e}^t|^2 {I}_{v^{\prime}, e}^{t, (i)} }
\end{equation}
where $\lambda_{e}^{t,(i)}$ and $p_{v, e}^{t, (i+1)}$ denote $\lambda_{e}^{t}$ in $i$-th iteration and $p_{v, e}^{t}$ in $(i+1)$-th iteration, respectively, and ${I}_{v^{\prime}, e}^{t, (i)}$ is given by:
\begin{equation}
	{I}_{v^{\prime}, e}^{t, (i)} = \sum\limits_{\forall v^{\prime} \in \mathbf{V}_{h_{v, e}}^{t}} |h_{v^{\prime}, e}^t|^2 p_{v^{\prime}, e}^{t, (i)} + \sum\limits_{\forall e^{\prime} \in \mathbf{E} / \{e\}} \sum\limits_{\forall v^{\prime} \in \mathbf{V}_{e^{\prime}}^{t}} |h_{v^{\prime}, e}^t|^2 p_{v^{\prime}, e^{\prime}}^{t, (i)} + N_{0}
\end{equation}
where $p_{v^{\prime}, e}^{t, (i)}$ and $p_{v^{\prime}, e^{\prime}}^{t, (i)}$ denote $p_{v^{\prime}, e}^{t}$ and $p_{v^{\prime}, e^{\prime}}^{t}$ in $i$-th iteration, respectively.

\textit{2) Computation Resource Allocation :} Similar to the transmission power allocation, we first introduce a Lagrange multiplier ${\lambda}_{e}^{t}$ into the $\mathcal{P}8$.
Then, the dual problem of $\mathcal{P}8$ can be expressed as:
\begin{equation}
	\begin{aligned}
		\mathcal{P}10: & \min_{\mathbf{\lambda}_{e}^{t}, \mathbf{{C}}_{e}^{t}}  g_6 = g_4^e - {\lambda}_{e}^{t} (\sum_{\forall k_{v}^{t} \in {\mathbf{K}_{q_e}^{t} }} {c_{v, e}^t} - c_{e} ) \\
		&\text { s.t. } \operatorname{C12}: \mathbf{\lambda}_{e}^{t} \geq 0  \\
	\end{aligned}
\label{equ_p9}
\end{equation}
Based on the KKT condition \cite{boyd2004convex}, we can get the following formulas:
\begin{equation}
	\begin{aligned}
		\nabla_{\mathbf{C}_e^{t}} g_4^e + \mathbf{\lambda}_{e}^{t}\nabla_{\mathbf{C}_e^{t}} (\sum_{\forall k_{v}^{t} \in {\mathbf{K}_{q_e}^{t} }} {c_{v, e}^t} - c_{e} ) &= 0,\\
		\mathbf{\lambda}_{e}^{t}( \sum_{\forall k_{v}^{t} \in {\mathbf{K}_{q_e}^{t} }} {c_{v, e}^t} - {c_{e}} ) &= 0,\\
		\mathbf{\lambda}_{e}^{t} &\geq 0
	\end{aligned}
\end{equation}
By solving the set of equations, the optimal solution of computation resource allocation for task $k_{v}^{t}$ can be obtained as follows.
\begin{equation}
	{c_{v, e}^{t}}^{\star} = \frac{1 / c_e \sqrt{d_k  c_k} } {\sum_{\forall k_{v}^{t} \in {\mathbf{K}_{q_e}^{t} }} 1 / c_e \sqrt{d_k  c_k}} , \forall k_{v}^{t} \in {\mathbf{K}_{q_e}^{t} } 
\end{equation}

\section{Performance Evaluation}

\subsection{Settings}

In this section, we implement a simulation model \footnote{The code can be found at https://github.com/neardws/Game-Theoretic-Deep-Reinforcement-Learning} by using Python 3.9.13 and TensorFlow 2.8.0 to evaluate the performance of the proposed solutions.
The simulation model is based on a Ubuntu 20.04 server with an AMD Ryzen 9 5950X 16-core processor (clocked at 3.4 GHz), two NVIDIA GeForce RTX 3090 graphic processing units, and 64 GB memory.
We consider the general scenario in a 3$\times$3 $\operatorname{km}^2$ square area, where $E = 9$ edge nodes such as 5G base stations and RSUs are uniformly distributed in the road map.
On the basis of referring to \cite{zhu2021decentralized}, \cite{liu2021rtds}, \cite{xu2021socially}, and \cite{zhou2019computation}, the simulation parameter settings are as follows. 
The computation capacities (i.e., the CPU clock frequencies) of edge nodes are different, which are set as uniformly distributed in $[3, 10]$ GHz \cite{zhou2019computation}.  
The communication range of V2I communications is set as $u_e =$ 500 m \cite{zhu2021decentralized}.

Further, the realistic vehicular trajectories are utilized as traffic inputs collected from Didi GAIA open data set \cite{didi} by extracting from a 3$\times$3 $\operatorname{km}^2$ area of Qingyang District, Chengdu, China, on 16 Nov. 2016.
In particular, we have examined three service scenarios with different periods.
Detailed statistics such as the total number of vehicle traces, the average dwell time (ADT) of vehicles, the variance of dwell time (VDT), the average number of vehicles (ANV) in each second, the variance of the number of vehicles (VNV), the average speed of vehicles (ASV), and the variance of speeds of vehicles (VSV) are summarized in Table \ref{table_traffic_characteristics}.
The heat maps of vehicle distribution within the scheduling period are shown in Fig. \ref{fig_4_heatmap} to exhibit the traffic characteristic under different scenarios better.
Comparing Figs. \ref{fig_4_heatmap}(a), \ref{fig_4_heatmap}(b), \ref{fig_4_heatmap}(c), and \ref{fig_4_heatmap}(d), it is noted that the vehicle density in the rush hour  (i.e., around 8:00, 13:00, and 18:00 on Nov. 16, 2016, Wed.) is much higher than that during the night (i.e., around 23:00) in the same area.
It can also be noted that the vehicle distributions are different during different scheduling periods.

\begin{table}[ht]\scriptsize
\centering
\caption{Traffic characteristics of each scenario}
\begin{tabular}[t]{ccccccccc}
\hline
\hline
Scenario&Time&Traces&ADT&VDT&ANV&VNV&ASV&VSV\\
\hline
No. 1&8:00-8:05&718&198.3(s)&123.8&474.6&11.6&5.22(m/s)&2.61\\
No. 2&13:00-13:05&862&188.5(s)&125.1&541.6&5.38&5.59(m/s)&2.73\\
No. 3&18:00-18:05&928&196.5(s)&122.5&608.0&7.76&4.60(m/s)&2.40\\
No. 4&23:00-23:05&359&173.7(s)&124.1&207.9&3.93&7.30(m/s)&3.16\\
\hline
\hline
\end{tabular}
\label{table_traffic_characteristics}
\end{table}

\begin{figure*}
\centering
  \includegraphics[width=1\columnwidth]{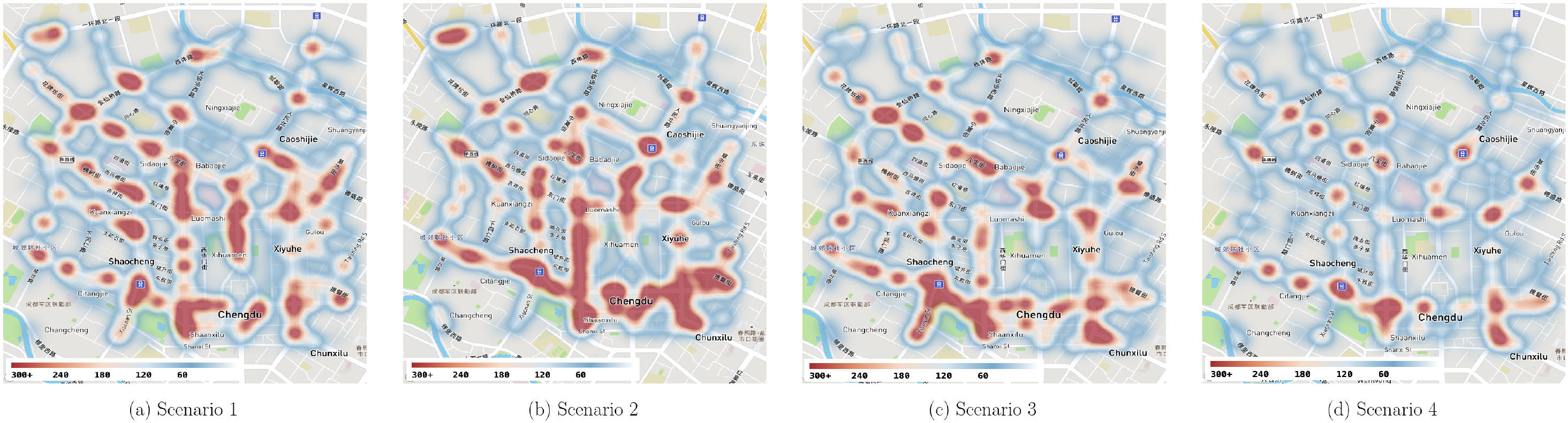}
  \caption{Heat map of the distribution of vehicles under different scenarios}
  \label{fig_4_heatmap}
\end{figure*} 

For the implementation of the MAD4PG, the architectures of the policy and critic networks are described as follows.
The local policy network is a five-layer fully connected neural network with three hidden layers, where the number of neurons is 256, 256, and 256, respectively.
The architecture of the target policy network is the same as the local policy network.
The local critic network is a five-layer fully connected neural network with three hidden layers, where the numbers of neurons are 512, 512, and 256, respectively.
The architecture of the target critic network is the same as the local critic network.
The Rectified Linear Unit (ReLU) is utilized as the activation function, and the Adam optimizer is used to update network weights.
The number of distributed actors is set as $J$=10. 
The primary system model parameters and algorithm parameters are shown in Table \ref{table_parameters}.

\begin{table*}[ht]\scriptsize
\centering
\caption{Parameters}
\begin{tabular}[t]{lll}
\hline
\hline
Parameters of System Model\\
\textbf{Parameter}&\textbf{Value}\\
\hline
Requested task size $d_k$ \cite{liu2021rtds}&[0.01, 5] MB\\
Computation cycles for processing 1-bit task data $c_k$ \cite{zhu2021decentralized} & 500 cycles/bit\\
Deadline of tasks $t_k$ \cite{liu2021rtds}&[5, 10] s\\
V2I communications bandwidth $b$ \cite{zhou2019computation}&20 MHz\\
Computation capability of edge node $c_e$ \cite{zhou2019computation}&[3, 10] GHz\\
Maximum power of V2I communications $p_e$ \cite{zhu2021decentralized}&1$\times 10^3$ mW\\
V2I communication range $u_e$ \cite{zhu2021decentralized}&500 m\\
Wired transmission rate $z$&50 Mbps\\
Distance discount $\zeta$& 6.667$\times 10^{-4}$\\
Additive white Gaussian noise $N_0$ \cite{xu2021socially}&-90 dBm\\
Large scale path loss exponent $\varphi$ \cite{xu2021socially}&3\\
\hline
\hline
Parameters of MAD4PG\\
\textbf{Parameter}&\textbf{Value}\\
\hline
Discount $\gamma$&0.996\\
Batch size $M$&256\\
Maximum replay buffer size $|\mathcal{B}|$&1$\times10^{6}$\\
Exploration constant $\epsilon$&0.3\\
Learning rate for policy network and critic network&1$\times10^{-4}$\\
Target network parameter updating period $t_{\operatorname{tgt}}$&100\\
Network parameter updating period of the distributed actors $t_{\operatorname{act}}$&1000\\
\hline
\hline
\end{tabular}
\label{table_parameters}
\end{table*}
 
For performance comparison, we implement four comparable algorithms as follows.
\begin{itemize}
	\item \textit{ORM}: it is divided into two stages: resource allocation and task offloading. Then, edge nodes prefer to migrate the tasks to other edge nodes.
	\item \textit{ORL}: it allocates the V2I transmission power and computation resource as ORM, and each edge node prefers executing the tasks locally.
	\item \textit{D4PG} \cite{barth2018distributed}: it jointly determines the task offloading decision, V2I transmission power allocation, and computation resource allocation by implementing a DDPG agent with the global system statues as inputs, where the utility function is adopted as the reward of the agent.
	\item \textit{MADDPG} \cite{zhang2021adaptive}: it allocates the V2I transmission power and computation resource as ORM, and implements the MADDPG in each edge nodes to determine task offloading decision independently, where the utility function is adopted as the rewards for edge nodes.
\end{itemize}

For performance evaluation, we collect the following statistics: the uploading time and processing time of each task; the total number of tasks executed locally, denoted by $K_{\operatorname{local}}$; the number of tasks migrated to other edge nodes, denoted by $K_{\operatorname{migrated}}$; the total number of tasks, denoted by $K_{\operatorname{total}}$, and the number of serviced tasks, denoted by $K_{\operatorname{serviced}}$. 
On this basis, four metrics named \textit{average processing time} (APT), \textit{average service time} (AST), \textit{average service ratio} (ASR), and \textit{cumulative reward} (CR) are obtained based on Eqs. \ref{equation_execution_time}, \ref{service time}, \ref{service ratio}, and \ref{system reward}, respectively. 
We further design the following two extra metrics for analysis.
\begin{itemize}
	\item \textit{Average Achieved Potential} (AAP): it is defined as the sum of edge rewards (i.e., the achieved potential) divided by the number of edge nodes during the scheduling period, which is computed by $ \frac{1}{E}\sum_{\forall e \in \mathbf{E}} \sum_{\forall t \in \mathbf{T}} r_{e}^{t}$.
	\item \textit{Proportion of Local Processing to Migration} (PLPM): The percentage of tasks that processed locally is computed by $P_{\operatorname{local}} = K_{\operatorname{local}}/K_{\operatorname{total}}$, whereas the percentage of tasks that migrated to other edge nodes is computed by $P_{\operatorname{migrated}} = K_{\operatorname{migrated}}/K_{\operatorname{total}}$, and we have $P_{\operatorname{local}} +P_{\operatorname{migrated}} = 1$. 
\end{itemize}

\subsection{Results and Analysis}

\begin{figure*}
\centering
  \includegraphics[width=1\columnwidth]{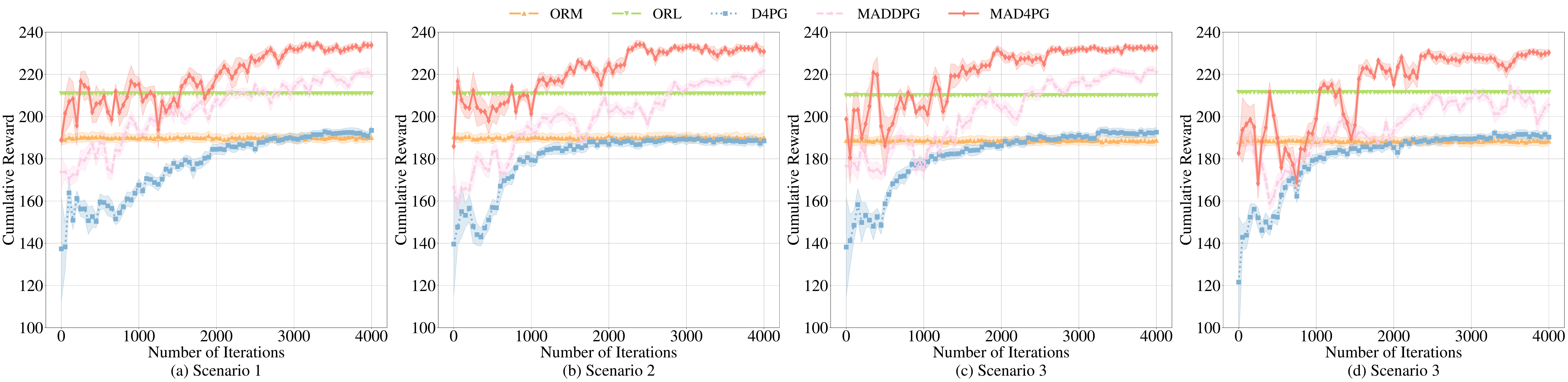}
  \caption{Algorithm convergence under different traffic scenarios}
  \label{fig Algorithm convergence}
\end{figure*} 

\textit{1) Algorithm Convergence:} Figure \ref{fig Algorithm convergence} compares the CR of the five algorithms in terms of convergence performance under different traffic scenarios. As noted, the convergence speed of  the proposed solution is just next to D4PG (i.e., around 3000 iterations), but it achieves the highest CR value (i.e., around 230). In contrast, D4PG and MADDPG converge in around 2000 and 3500 iterations and achieve the CR around 190 and 220, respectively. ORL and ORM achieve the CR of around 210 and 189, respectively. It is observed that the ORM, ORL, MADDPG, and MAD4PG can achieve the much higher CR than D4PG in the first 2000 iterations. The primary reason is that the proposed optimal resource allocation solution is used in ORM, ORL, MADDPG, and MAD4PG to make the performance better than D4PG, which jointly determines task offloading and resource allocation. On the other hand, due to the speed-up replay experience sampling by leveraging the distributed actors in MAD4PG, the proposed solution converges much faster than MADDPG, as well as achieves the highest CR under different traffic scenarios.

\begin{figure*}
\centering
  \includegraphics[width=1\columnwidth]{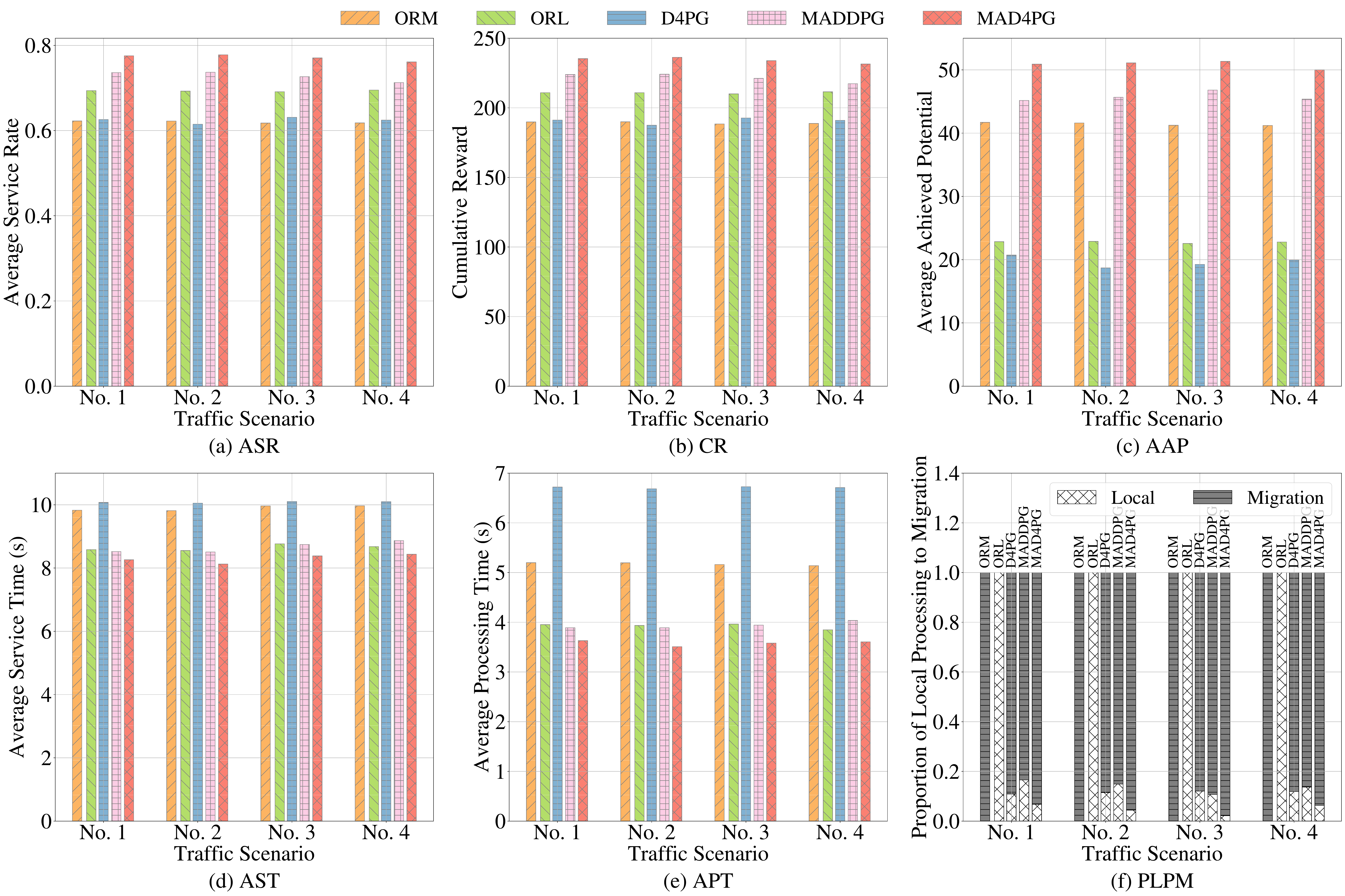}
  \caption{Performance comparison under different traffic scenarios}
  \label{different traffic scenarios}
\end{figure*}

\textit{2) Effect of Traffic Scenarios:} Figure \ref{different traffic scenarios} compares the five algorithms under different traffic scenarios. As demonstrated, Fig. \ref{different traffic scenarios}(a) compares the ASR of the five algorithms, and the MAD4PG achieves the highest ASR. Figure \ref{different traffic scenarios}(b) compares the CR of the five algorithms. As noted, the CR of the MAD4PG is higher than ORM, ORL, D4PG, and MADDPG. Figure \ref{different traffic scenarios}(c) shows the AAP of the five algorithms. It is expected that the MAD4PG achieves the highest AAP under all the scenarios, which indicates the advantage of the potential function as rewards of edge nodes in MAD4PG. Figures \ref{different traffic scenarios}(d) and \ref{different traffic scenarios}(e) compare the AST and APT of the five algorithms, respectively. It demonstrates that the MAD4PG can achieve cooperative communication and computation among edge nodes, improving the overall service ratio by minimizing the average service time of tasks. As expected, the APT of the MAD4PG is the lowest. It can be further justified by Fig. \ref{different traffic scenarios}(f), which shows that the tasks are more likely to migrate to other edge nodes for faster processing.

\begin{figure*}
\centering
  \includegraphics[width=1\columnwidth]{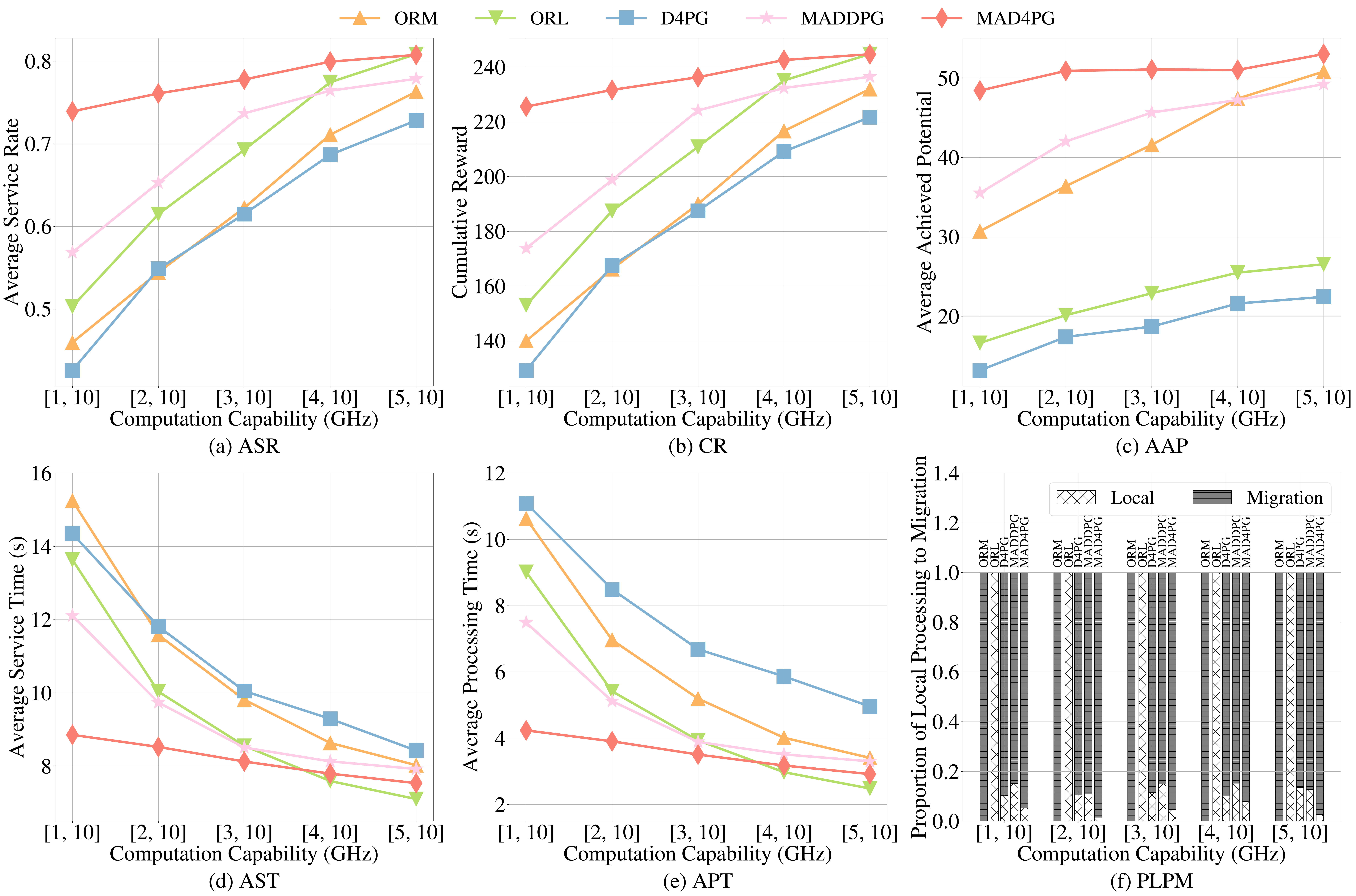}
  \caption{Performance comparison under different computation capabilities of edge nodes
  \label{different computation capability}}
\end{figure*} 

\textit{3) Effect of Computation Capability of Edge Nodes:} Figure \ref{different computation capability} compares the five algorithms under different computation capabilities of edge nodes. In this set of experiments, we consider that the computation capabilities of edge nodes follow the uniformed distributions, which increase from $c_e\sim[1, 10]$ GHz to $c_e\sim[5, 10]$ GHz. A more significant computation capability represents that more tasks can be executed. Figure \ref{different computation capability}(a) compares the ASR of the five algorithms. With the increasing computation capability, the ASR of all algorithms increases accordingly. Figure \ref{different computation capability}(b) compares the CR of the five algorithms. In particular, the MAD4PG achieves the highest CR.  Figure \ref{different computation capability}(c) compares the AAP of the five algorithms. As expected, the performance of all five algorithms gets better when the computation capability increases. Figure \ref{different computation capability}(d) compares the AST of the five algorithms. It is noted that the AST of ORL is lower than that of MAD4PG when the computation capabilities of edge nodes are more considerable (i.e., $c_e\sim[4,10]$ GHz and $c_e\sim[5, 10]$ GHz). The reason is that the gap between the computation capabilities of different edge nodes becomes smaller. Thus, the task processing time is shorter when the tasks are executed locally than offloaded to other edge nodes. It can be further verified in Fig. \ref{different computation capability}(e), which shows the APT of the five algorithms. It is noted that the APT of ORL is the shortest when the computation capability is larger; however, the ASR of ORL is smaller than MAD4PG. This is because the cooperation of communication and computation among edge nodes is more efficient in the MAD4PG. The advantage can be further convinced by Fig. \ref{different computation capability}(f), which shows the PLPM of the five algorithms.

\begin{figure*}
\centering
  \includegraphics[width=1\columnwidth]{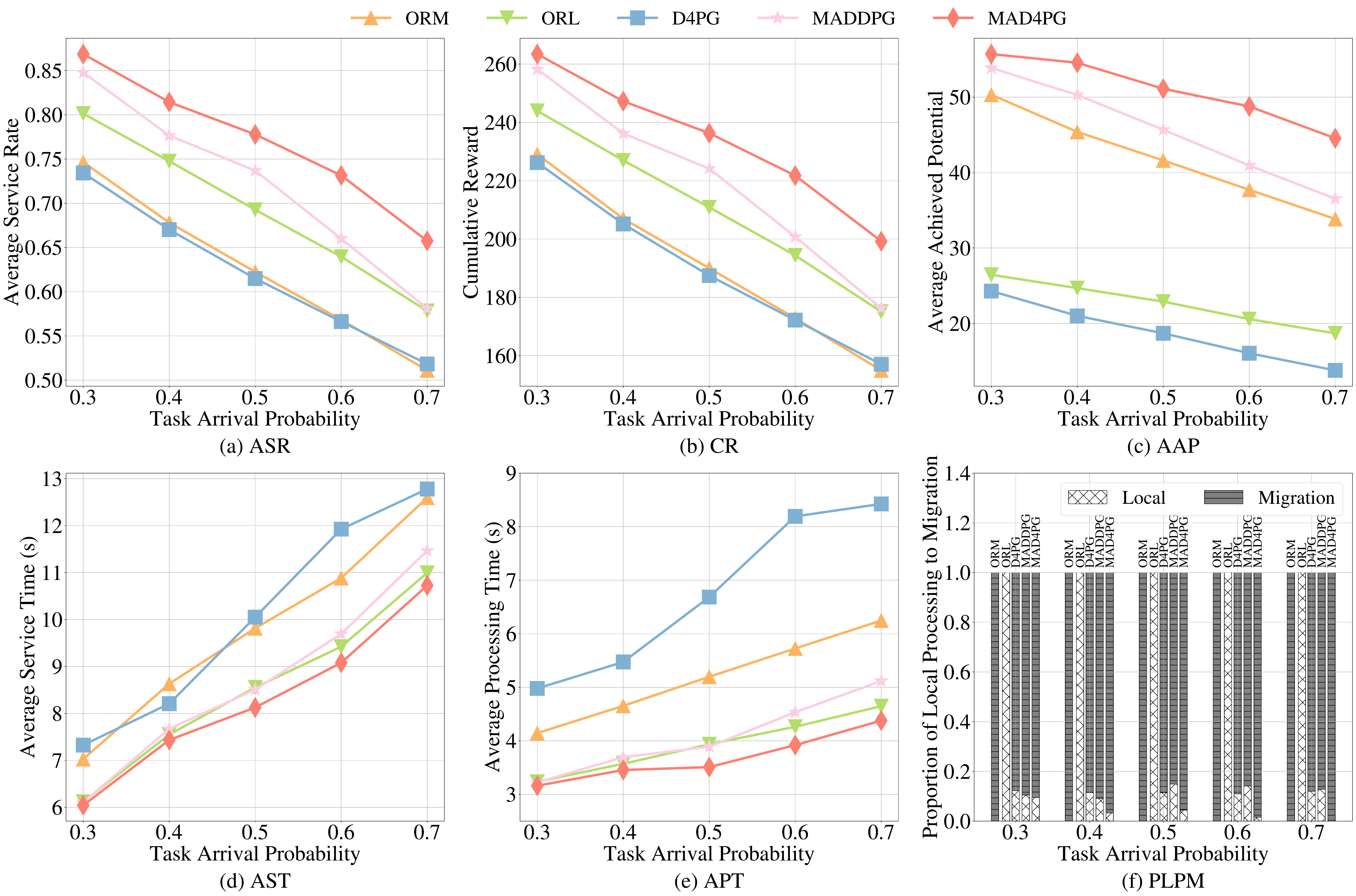}
  \caption{Performance comparison under different arrival probabilities of tasks
  \label{different task arrival probability}}
\end{figure*} 

\textit{4) Effect of Arrival Probability of Tasks:} Figure \ref{different task arrival probability} compare the five algorithms under different task arrival probabilities of vehicles. In this set of experiments, we consider that the task arrival probability of vehicles at each time slot increases from $\tau_{v}^{t}=0.3$ to $\tau_{v}^{t}=0.7$. As expected, the performance of all five algorithms gets worse when the task arrival probability increases. Figure \ref{different task arrival probability}(a) compares the ASR of the five algorithms, and the MAD4PG achieves the highest ASR. Figures \ref{different task arrival probability}(b) and \ref{different task arrival probability}(c) compare the CR and AAP of the five algorithms, showing that the MAD4PG can remain the highest CR and AAP across all cases, which indicates the advantages of MAD4PG by adopting the potential function as edge node reward. Figures \ref{different task arrival probability}(d) and \ref{different task arrival probability}(e) compare the AST and APT of the five algorithms. It is observed that the performance gap among the ORL, MADDPG, and MAD4PG is small when the task arrival probability increases from 0.3 to 0.4. The reason is that the scheduling effect is not significant when there are sufficient resources. Figure \ref{different task arrival probability}(f) compares the PLPM of the five algorithms. When the task arrival probability increases, the proportion of tasks processed locally in the MAD4PG decreases. The reason is that the tasks migrated to other edge nodes are more likely to be serviced before their deadlines.

\section{Conclusion and Future Work}

This paper presented a NOMA-based VEC architecture for cooperative communication and computation among edge nodes.
On this basis, the V2I transmission model was derived by considering the intra-edge and inter-edge interferences, and the task offloading model was derived by considering heterogeneous resources and cooperation among edge nodes.
Then, we formulated the CRO problem to maximize the service ratio.
Further, we decomposed CRO into two subproblems, namely, task offloading and resource allocation. 
The task offloading subproblem was modeled as an EPG with the existence and convergence of NE.
The MAD4PG algorithm was proposed, in which edge nodes act as agents with action space for determining the task offloading to achieve the NE.
In particular, the potential function of the game model was adopted as the rewards of edge nodes.
Then, the optimal solution was proposed to solve the remaining resource allocation problem based on the gradient-based iterative method and KKT condition.
Lastly, we built the simulation model with realistic vehicular trajectories extracted from different periods, and a comprehensive performance evaluation demonstrated the superiority of the proposed solutions.

In future work, we would like to further improve the system performance by considering the inner relationship between vehicle mobility and cooperative computation among edge nodes, e.g., the tasks can be migrated to the edge node where the vehicle will travel.
In addition, the end-edge-cloud hierarchical architecture for vehicular networks is expected to be incorporated to enhance performance by leveraging the cooperation among vehicles, edge nodes, and the cloud.
Finally, we would like to implement the solution model in the real world to verify system capability based on the actual vehicular network environments.

\begin{appendices}
\section{}

\subsection{Proof of Theorem \ref{theorem_potential_game}}
\label{appendices_1}
\begin{proof} According to Eq. \ref{potential function}, we have
\begin{equation}
	\begin{aligned}
		&{F}_{e}\left(\mathcal{{S}}^{\prime}_{e}, \mathcal{S}_{-e}\right) - {F}_{e}\left(\mathcal{S}_{e}, \mathcal{S}_{-e}\right) \\
		&={U}_{e}\left(\mathcal{S}^{\prime}_{e}, \mathcal{S}_{-e}\right) - {U}_{e}\left(-\mathcal{S}^{\prime}_{e}, \mathcal{S}_{-e}\right) - \left( {U}_{e}\left(\mathcal{S}_{e}, \mathcal{S}_{-e}\right) - {U}_{e}\left(-\mathcal{S}_{e}, \mathcal{S}_{-e}\right) \right)\\
		&={U}_{e}\left(\mathcal{S}^{\prime}_{e}, \mathcal{S}_{-e}\right) - {U}_{e}\left(\mathcal{S}_{e}, \mathcal{S}_{-e}\right) + {U}_{e}\left(-\mathcal{S}_{e}, \mathcal{S}_{-e}\right) - {U}_{e}\left(-\mathcal{S}^{\prime}_{e}, \mathcal{S}_{-e}\right)\\
		&={U}_{e}\left(\mathcal{S}^{\prime}_{e}, \mathcal{S}_{-e}\right) - {U}_{e}\left(\mathcal{S}_{e}, \mathcal{S}_{-e}\right)
	\end{aligned}
\end{equation}
Thus, the Theorem \ref{theorem_potential_game} is proved.
\end{proof}

\subsection{Proof of Lemma \ref{lemma_neset}}
\label{appendices_2}
\begin{proof}
Assume that $\mathcal{S}^{*}$ is a Nash equilibrium of game $\mathcal{G}$, we have
\begin{equation}
	U_{e}\left(\mathcal{S}_{e}^{*}, \mathcal{S}_{-e}^{*}\right) - U_{e}\left(\mathcal{S}_{e}, \mathcal{S}_{-e}^{*}\right) \geq 0, \quad \forall \mathcal{S}_{e} \in \mathbf{S}_{e}, \forall e \in \mathbf{E}
\end{equation}
According to the definition of the exact potential game, we have
\begin{equation}
	F_{e}\left(\mathcal{S}_{e}^{*}, \mathcal{S}_{-e}^{*}\right) - F_{e}\left(\mathcal{S}_{e}, \mathcal{S}_{-e}^{*}\right) \geq 0, \quad \forall \mathcal{S}_{e} \in \mathbf{S}_{e}, \forall e \in \mathbf{E}
\end{equation}
Therefore, $\mathcal{S}^{*}$ is also a Nash 
of game $\mathcal{G}^{F}$, and $\operatorname{NE}(\mathcal{G}^{F}) \subseteq \operatorname{NE}(\mathcal{G})$.
Similarly, $\operatorname{NE}(\mathcal{G}) \subseteq \operatorname{NE}(\mathcal{F})$.
\end{proof}

\subsection{Proof of Theorem \ref{Nash equilibrium existence}}
\label{appendices_3}
\begin{proof}
The strategy space $\mathbb{S}$ is closed and bounded. 
Hence, the potential function $F_{e}(\mathcal{S})$ has at least one maximum point in $\mathbb{S}$, which corresponds to the Nash equilibrium of the game $\mathcal{G}^{F}$. 
Then, according to Lemma. \ref{lemma_neset}, the game $\mathcal{G}$ has at least one pure-strategy Nash equilibrium.
\end{proof}

\subsection{Proof of Theorem \ref{Nash convergence}}
\label{appendices_4}
\begin{proof}
Since the strategy space $\mathbb{S}$ of game $\mathcal{G}$ is closed and bounded, $\exists F^{\max} \in \mathbb{R}$, and $F^{\max} < \infty$ such that $F^{\max} = \sup _{\mathcal{S} \in \mathbb{S}} F_{e}(S)$. 
Suppose that the path $\rho=\left(\mathcal{S}^{0}, \mathcal{S}^{1}, \ldots, \mathcal{S}^{i}, \ldots\right)$ is an $\epsilon$-improvement path, and it is infinite.
By the definition of $\epsilon$-improvement path, we have $U_{e}\left(\mathcal{S}^{i+1}\right) > U_{e}\left(\mathcal{S}^{i}\right) + \epsilon, \exists \epsilon \in \mathbb{R}_{+}, \forall i$.
Thus, we have $F_{e}\left(\mathcal{S}^{i+1}\right) > F_{e}\left(\mathcal{S}^{i}\right) + \epsilon^{\prime}, \exists \epsilon^{\prime} \in \mathbb{R}_{+}, \forall i$, where $\epsilon^{\prime}$ is a sufficiently small constant.
It can further imply that
\begin{equation}
	\begin{aligned}
		&F_{e}\left(\mathcal{S}^{i}\right) > F_{e}\left(\mathcal{S}^{0}\right) + i \cdot  \epsilon^{\prime}, \forall i \\
		&\lim _{i \rightarrow \infty} F\left(S^{i}\right) > \lim _{i \rightarrow \infty} \left \{ F_{e}\left(\mathcal{S}^{0}\right) + i \cdot  \epsilon^{\prime} \right\} =\infty
	\end{aligned}
\end{equation}
It contradicts $F^{\max} < \infty$, which indicates that the path $\rho$ must have finite steps and terminate at an $\epsilon$-equilibrium point.
\end{proof}
\end{appendices}

\section*{References}
\bibliography{mybibfile}

\authorbibliography[scale=0.3, wraplines=5, imagewidth=2cm, imagepos=l]{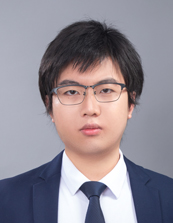}{}{\textbf{Xincao Xu} received the B.S. degree in network engineering from the North University of China, Taiyuan, China, in 2017. He is currently pursuing the Ph.D. degree in computer science at Chongqing University, Chongqing, China. His research interests include vehicular networks, edge computing, and deep reinforcement learning.\vspace{1\baselineskip}}

\authorbibliography[scale=0.1, wraplines=5, imagewidth=2cm, imagepos=l]{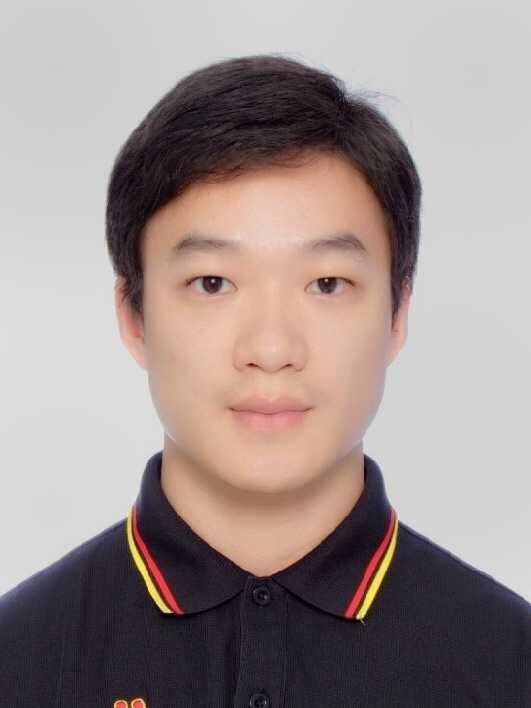}{}{\textbf{Kai Liu} received the Ph.D. degree in computer science from the City University of Hong Kong in 2011. He is currently a Full Professor with the College of Computer Science, Chongqing University, China. From 2010 to 2011, he was a Visiting Scholar with the Department of Computer Science, University of Virginia, Charlottesville, VA, USA. From 2011 to 2014, he was a Postdoctoral Fellow with Nanyang Technological University, Singapore, City University of Hong Kong, and Hong Kong Baptist University, Hong Kong. His research interests include mobile computing, pervasive computing, intelligent transportation systems, and the Internet of Vehicles.\vspace{1\baselineskip}}

\authorbibliography[scale=0.12, wraplines=5, imagewidth=2cm, imagepos=l]{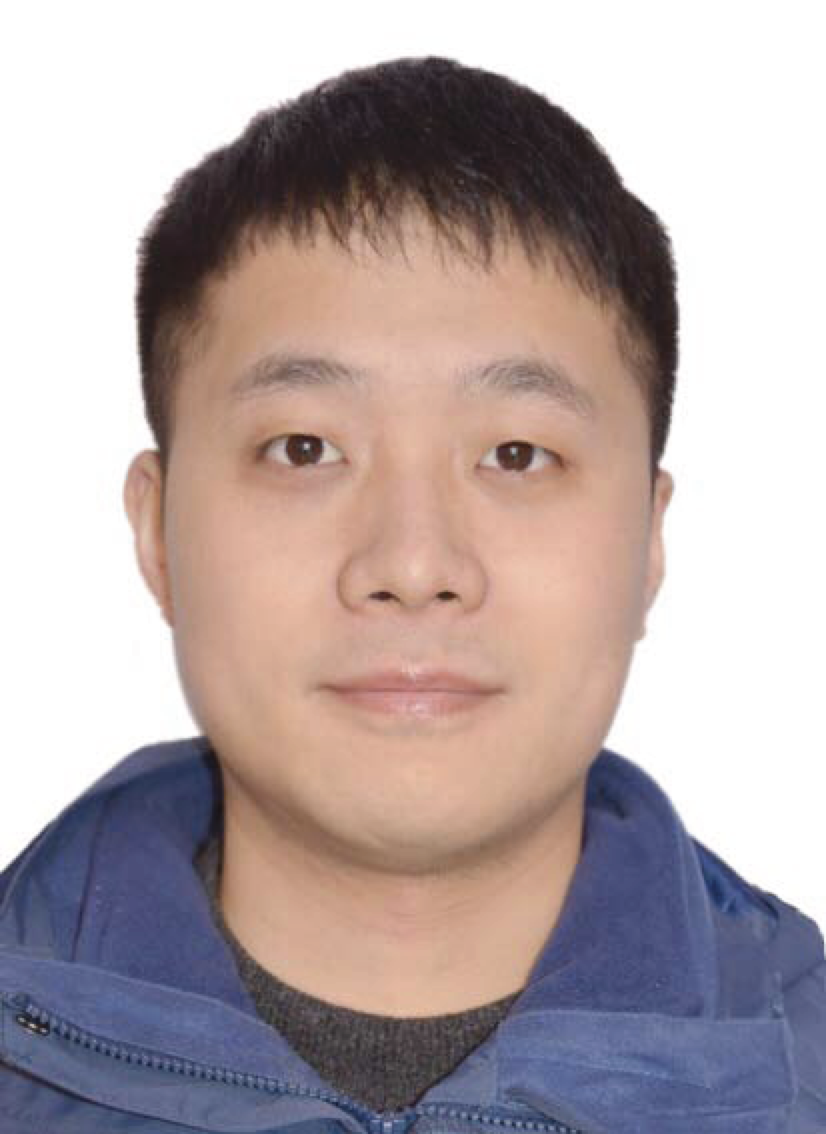}{}{\textbf{Penglin Dai} received the B.S. degree in mathematics and applied mathematics and the Ph.D. degree in computer science from Chongqing University, Chongqing, China, in 2012 and 2017, respectively. He is currently an Associate Professor with the School of Computing and Artificial Intelligence, Southwest Jiaotong University, Chengdu, China. His research interests include intelligent transportation systems and vehicular cyber-physical systems.\vspace{1\baselineskip}}

\authorbibliography[scale=0.18, wraplines=5, imagewidth=2cm, imagepos=l]{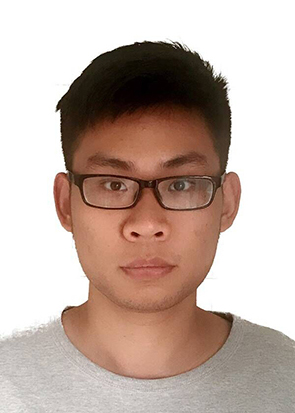}{}{\textbf{Feiyu Jin} received the M.S. Degree in computer science from Chongqing University, Chongqing, China, in 2020, where he is currently pursuing the Ph.D. degree. His research interests include pervasive computing, mobile computing, and intelligent transportation system.\vspace{1\baselineskip}}

\authorbibliography[scale=0.45, wraplines=5, imagewidth=2cm, imagepos=l]{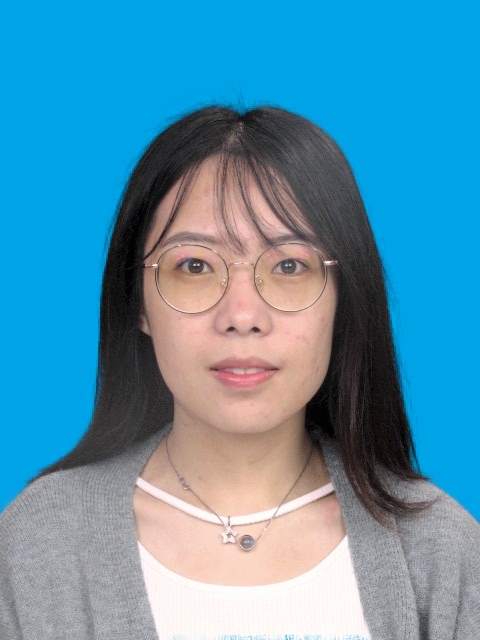}{}{\textbf{Hualing Ren} received the B.S. degree from the School of Software, Chongqing University, Chongqing, China, in 2018. She is currently pursuing the Ph.D. degree at the College of Computer Science, Chongqing University, China. Her research interests include intelligent transportation systems and mobile edge computing.\vspace{1\baselineskip}}

\authorbibliography[scale=0.5, wraplines=5, imagewidth=2cm, imagepos=l]{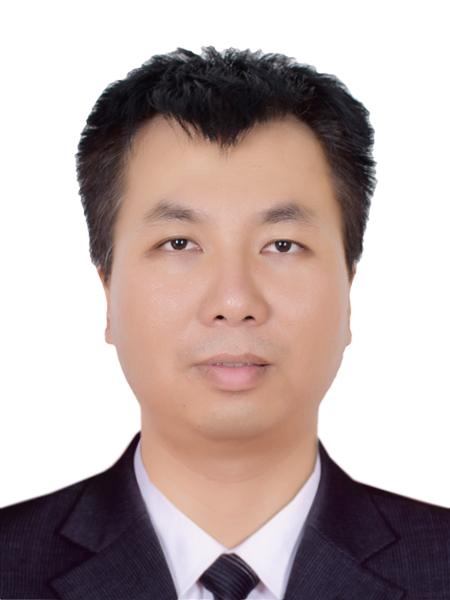}{}{\textbf{Choujun Zhan} received the BS degree in automatic control engineering from Sun Yat-Sen University, Guangzhou, China, in 2007 and the PhD degree in electronic engineering from City University of Hong Kong in 2012. After graduation, he worked as a postdoctoral fellow at the Hong Kong Polytechnic University. Now, he has been a Professor with the School of Computer, South China Normal University. His research interests include complex networks, time series modeling and prediction, epidemic spreading, information diffusion and machine learning.
\vspace{1\baselineskip}}

\authorbibliography[scale=0.10, wraplines=6, imagewidth=2cm, imagepos=l]{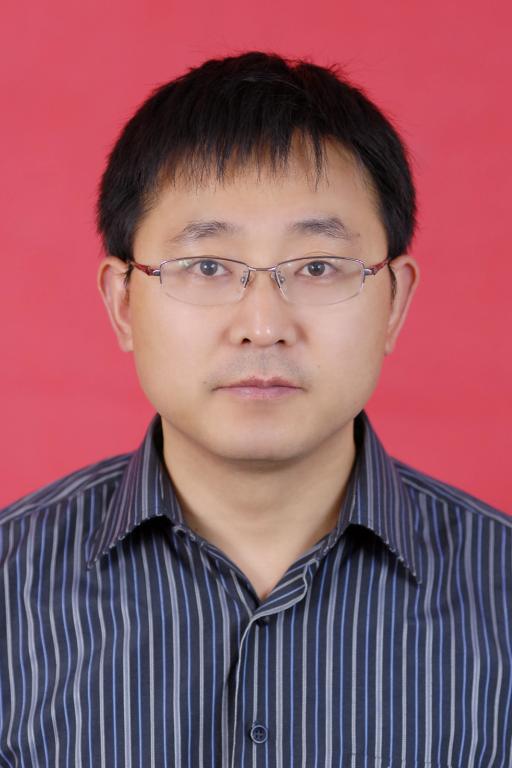}{}{\textbf{Songtao Guo} received the BS, MS, and PhD degrees in computer software and theory from Chongqing University, Chongqing, China, in 1999, 2003, and 2008, respectively. He is currently an associate dean and full professor with the College of Computer Science, Chongqing University, China. He was a senior research associate with the City University of Hong Kong from 2010 to 2011 and a visiting scholar with Stony Brook University, New York, from May 2011 to May 2012. He was a professor from 2012 to 2018 with Southwest University, China. His research interests include wireless sensor networks, wireless ad hoc networks, and parallel and distributed computing. He has published more than 70 scientific papers in leading refereed journals and conferences. He has received many research grants as a principal investigator from the National Science Foundation of China and Chongqing and the postdoctoral Science Foundation of China.
\vspace{0\baselineskip}}

\end{document}